\newtheorem{theorem}{Theorem}
\newtheorem{remark}{Remark}
\begin{document}
\title{\huge{AFDM-Enabled Integrated Sensing and Communication: Theoretical Framework and Pilot Design}}
 
\author{ 
 {Fan~Zhang,  
 Zhaocheng~Wang,~\IEEEmembership{Fellow,~IEEE},
 Tianqi~Mao,~\IEEEmembership{Member,~IEEE}, 
  Tianyu~Jiao, Yinxiao Zhuo,
 Miaowen~Wen,~\IEEEmembership{Senior Member,~IEEE}, 
 Wei~Xiang,~\IEEEmembership{Senior Member,~IEEE},
 Sheng~Chen,~\IEEEmembership{Life Fellow,~IEEE}, \\and George K. Karagiannidis,~\IEEEmembership{Fellow,~IEEE} }
\thanks{This work was supported in part by National Key R$\&$D Program of China under Grant 2021YFA0716600,
in part by National Natural Science Foundation of China under Grants 62401054 and 62088101, in part by the Open Project Program of State Key Laboratory of CNS/ATM under Grant 2024B12, and in part by Young Elite Scientists Sponsorship Program by CAST under Grant 2022QNRC001.
\emph{(Fan Zhang and Zhaocheng Wang are Co-first authors with equal contribution.)} \emph{(Corresponding author: Tianqi Mao.)}} 
\thanks{F.~Zhang, Z.~Wang, T.~Jiao, and Y.~Zhuo are with Department of Electronic Engineering, Tsinghua University, Beijing 100084, China (e-mails: zf22@mails.tsinghua.edu.cn, 
zcwang@tsinghua.edu.cn, jiaoty22@mails.tsinghua.edu.cn, zhuoyx20@mails.tsinghua.edu.cn).} 
\thanks{T. Mao is with State Key Laboratory of Environment Characteristics and Effects for Near-space, Beijing Institute of Technology, Beijing 100081, China, and is also with Greater Bay Area Innovation Research Institute of BIT, Zhuhai 519000, China (e-mail: maotq@bit.edu.cn).}
\thanks{M. Wen is with the School of Electronic and Information Engineering, South China University of Technology, Guangzhou, China (e-mail: eemwwen@scut.edu.cn).} %
\thanks{W. Xiang is with the School of Computing Engineering and Mathematical Sciences, La Trobe University, Melbourne, VIC 3086, Australia (e-mail:w.xiang@latrobe.edu.au).} %
\thanks{S. Chen is with the School of Computer Science and Technology, Ocean University of china, Qingdao 266100, China (e-mail: sqc@ecs.soton.ac.uk).} %
\thanks{G. K. Karagiannidis is with Department of Electrical and Computer Engineering, Aristotle University of Thessaloniki, Greece (e-mail: geokarag@auth.gr).} %
\vspace*{-8mm}
} %

\maketitle

\begin{abstract}
The integrated sensing and communication (ISAC) has been envisioned as one representative usage scenario of sixth-generation (6G) network. 
However, the unprecedented characteristics of 6G, especially the doubly dispersive channel, make classical ISAC waveforms rather challenging to guarantee a desirable performance level.
The recently proposed affine frequency division multiplexing (AFDM) can attain full diversity even under doubly dispersive effects, thus becoming a competitive candidate for next-generation ISAC waveforms. 
Relevant investigations are still at an early stage, which involve only straightforward design lacking explicit theoretical analysis. 
This paper provides an in-depth investigation on AFDM waveform design for ISAC applications. Specifically, the closed-form Cr\'{a}mer-Rao bounds of target detection for AFDM are derived, followed by a demonstration on its merits over existing counterparts.
Furthermore, we formulate the ambiguity function of the pilot-assisted AFDM waveform for the first time, revealing conditions for stable sensing performance. 
To further enhance both the communication and sensing performance of the AFDM waveform, we propose a novel pilot design by exploiting the characteristics of AFDM signals.
The proposed design is analytically validated to be capable of optimizing the ambiguity function property and channel estimation accuracy simultaneously as well as overcoming the sensing and channel estimation range limitation originated from the pilot spacing. Numerical results have verified the superiority of the proposed pilot design in terms of dual-functional performance.
\end{abstract}

\begin{IEEEkeywords}
Integrated sensing and communication (ISAC), sixth generation (6G), affine frequency division multiplexing (AFDM), Cr\'{a}mer-Rao bound (CRB), waveform design.
\end{IEEEkeywords}

\section{Introduction}\label{S1}

To support a plethora of emerging applications, such as autonomous driving, Internet of Things (IoT), and low-altitude economy, next-generation wireless networks are expected to not only maintain high-speed connectivity but also achieve reliable and high-resolution sensing capabilities \cite{Liu_jsac_22, Cui_network_21, Dong_twc_23}.  
To enable radar sensing with existing infrastructure of wireless communication networks, the integrated sensing and communication (ISAC) philosophy has been proposed as a promising solution for the sixth-generation (6G) networks \cite{Gonz_proc_24,Saad_Network_20}.
By sharing identical hardware and spectrum resources for both communication and sensing purposes, ISAC technology can realize dual functions simultaneously with desirable performance trade-off and efficient resource usage \cite{Zhou_ojcoms_22, Liu_spm_23, Li_tvt_24}.
Furthermore, it can facilitate information exchange between sensing and communication modules, thereby further improving the performance of dual functions \cite{Fan_iwcmc_24, Kaushik_csm_24, Zhang_icc_24}.

The integrated waveform design that optimally balances both communication and sensing performance is a core challenge in ISAC systems \cite{Sturm_procieee_11, Koivunen_spm_24}. 
Orthogonal frequency division multiplexing (OFDM), which is adopted in most commercial communication standards, has been extensively investigated within the context of ISAC waveform design \cite{Fan_jsac_24,Shi_systemsJ_21,Chen_cl_23}. Specifically, OFDM allows for flexible allocation of orthogonal resource elements (REs) between communication and sensing without mutual interference to achieve a good performance trade-off in time invariant channels \cite{Fan_wcnc_24}. 
However, in the doubly dispersive channel, which is prevalent when employing millimeter-wave/Terahertz frequencies under high mobility, OFDM suffers from notorious inter-carrier interference (ICI), causing communication performance degradation \cite{Liu_twc_15}. 
Furthermore, the destroyed subcarrier orthogonality incurs substantial mutual interference between communication and sensing, further impairing the performance of both functions.

To address this issue, several novel waveforms have been developed to ensure superior ISAC performance \cite{OTFS1, OTFS2, OCDM1, OCDM2}. The work \cite{OCDM1} designed the multicarrier orthogonal chirp division multiplexing (OCDM) by using a series of chirp signals, which can achieve full diversity in frequency-selective channels, thereby showing better gain over OFDM \cite{OCDM2}. 
However, this scheme fails to sufficiently exploit the diversity gain in doubly dispersive channels, and it is fragile to fast time variations. Orthogonal time frequency space (OTFS) \cite{OTFS1}, which is a two-dimensional modulation technique that modulates communication symbols in the delay-Doppler domain, takes advantage of the diversity of both time and frequency shifts caused by high mobility, to ensure robust communication and sensing performance in doubly dispersive channels. Nevertheless, the large pilot overhead and the complex transceiver algorithms hinder OTFS from practical implementation.

Recently, affine frequency division multiplexing (AFDM) was proposed as a generalized form of OFDM and OCDM \cite{AFDM}. AFDM is based on the discrete affine Fourier transform (DAFT), and the parameters of DAFT can be optimized according to channel characteristics to achieve full diversity gain under doubly dispersive channels. Compared to existing waveforms, AFDM can deliver superior communication performance under high mobility scenarios with reduced complexity and channel estimation overhead, making it a promising candidate for next-generation ISAC applications. 

\subsection{Related Works on AFDM-Enabled ISAC}\label{S1.1}
Conventional OFDM-based waveform designs are typically developed under the assumption of small Doppler shifts, where inter-carrier interference is negligible. This allows orthogonal resource elements to be allocated to communication and sensing functions separately, thereby avoiding mutual interference. In contrast, AFDM inherently exhibits coupling effect among subcarriers. The multipath propagation causes the energy of one subcarrier to spread across others, making it difficult to isolate the symbol carried by each subcarrier. As a result, conventional OFDM-based waveform design methodologies are no longer feasible in AFDM systems.

Compared to OFDM, the AFDM waveform has two adjustable parameters, $c_1$ and $c_2$, which increase the flexibility of ISAC waveform design. 
Specifically, the impact of $c_1$ on the bit error rate (BER) and the Cr\'{a}mer-Rao bound (CRB) of target ranging was investigated in  \cite{afdm_isac0}, and an adaptive waveform design method was proposed to achieve different communication and sensing requirements by flexible adjustment of $c_1$. However, this study only offers numerical results and lacked theoretical analysis to support its conclusions.
The work \cite{afdm_isac1} derived a simplified form of the AFDM ambiguity function. Based on this derivation, a parameter selection criterion was proposed to optimize the delay estimation resolution without considering the Doppler shift aspect.

The study \cite{afdm_isac2} proposed an AFDM-enabled ISAC system and designed two algorithms for target distance and velocity estimation in the time and DAFT domains, respectively. This system design employs entirely random communication symbols for sensing purpose, which fails to guarantee a stable sensing performance.
To tackle this issue, a pilot-assisted ISAC scheme was presented in \cite{afdm_isac3}, where one single pilot symbol was utilized for target sensing rather than the whole frame. In addition, a low-complexity self-interference cancellation method was proposed to improve the estimation accuracy. However, to avoid mutual interference, a sufficient number of zero-padded guard subcarriers have to be inserted between the pilot and data, which wastes spectrum resources.

To enhance spectral efficiency, the superimposed pilot philosophy was proposed in \cite{super_pilot}, where pilots and data symbols are superimposed together to activate all subcarriers for transmission. However, the pilot spacing strongly depends on the maximum channel delay, and estimation performance is poor for channels with large delay spread. The sensing range of this design is also limited by the pilot spacing, making it difficult to satisfy the requirement of a large sensing range.

\subsection{Motivation and Our Contributions}\label{S1.2}

It can be seen from the above literature review that investigations on AFDM-enabled ISAC are still at a nascent stage, especially lacking theoretical analysis for its sensing performance. Existing AFDM-enabled ISAC waveforms struggle to balance the spectral efficiency, channel estimation accuracy and sensing capability simultaneously. Against this background, for the first time we derive the CRBs of distance and velocity estimation for AFDM-enabled sensing, along with the two-dimensional ambiguity function, thereby refining the theoretical framework for AFDM-enabled ISAC systems.
Furthermore, by considering the superimposed pilot architecture, we propose a novel pilot design to optimize the ambiguity function property and the channel estimation performance simultaneously. The main contributions of this paper are summarized as follows.

\begin{enumerate}
\item To evaluate the sensing capability of AFDM-enabled ISAC waveform, closed-form CRB expressions for the distance and velocity estimation are derived. These results are applicable to other state-of-the-art multi-carrier waveforms that are special cases of AFDM, i.e., OFDM and OCDM, by setting $c_1\! =\! 0$ and $c_1\! =\! 1/2N$, respectively, where $N$ is the number of DAFT points. Based on the derived CRBs and numerical analysis, we highlight the significant advantages of AFDM over existing counterparts, in terms of ISAC applications.

\item We present the formulation of the ambiguity function for the pilot-assisted AFDM waveform and analyze its statistical properties. Through theoretical analysis, we derive two conditions that can minimize the variance of the ambiguity function to realize stable sensing performance, which offer inspiring guidelines for AFDM-enabled ISAC system design.

\item To further enhance the dual-functional performance of the pilot-assisted AFDM waveform, a novel pilot design is proposed, which incorporates the parameter $c_1$ setting, the pilot spacing configuration and the pilot sequence design. The pilot sequence is designed to achieve an ideal ambiguity function by leveraging the unique multipath separation characteristics of AFDM based on the constant amplitude zero autocorrelation (CAZAC) sequences. The proposed design is validated theoretically to realize minimal channel estimation error.
More importantly, it can overcome the limitations on maximum sensing range and delay spread imposed by the pilot spacing, and exhibits good dual-functional performance under large propagation delays.

\item Numerical results are offered to validate the superiority of the proposed pilot design in terms of communication and sensing performance. Compared to existing pilot schemes, our pilot design provides more accurate channel estimation, resulting in a lower BER at the same signal-to-noise ratio (SNR). Furthermore, owing to its ideal ambiguity function, our pilot design does not introduce high sidelobe interference in the target sensing, significantly reducing the false alarm probability and leading to a preferable receiver operating characteristic (ROC) curve. Simulations also show that the velocity and distance estimation errors of the proposed design are close to the derived CRBs.
\end{enumerate}

\subsection{Organization and Notations}\label{S1.3}

The remainder of this paper is organized as follows. Section~\ref{S2} introduces the AFDM-enabled ISAC system model. Section~\ref{S3} presents our theoretical framework for AFDM-enabled ISAC by deriving the sensing CRBs and the ambiguity function of the pilot-assisted AFDM waveform. In Section~\ref{S4}, a novel pilot design is proposed to enhance the dual-functional performance. Numerical results are provided in Section~\ref{S5}, followed by concluding remarks drawn in Section~\ref{S6}.

\emph{Notation:} 
The $m$-th element of vector $\mathbf{x}$ is denoted by $x[m]$, while $X[m,k]$ denotes the element in the $m$-th row and the $k$-th column of matrix $\mathbf{X}$. $(\cdot)^*$, $(\cdot)^{\rm T}$, and $(\cdot)^{\rm H}$ denote the conjugate, transpose, and Hermitian operations, respectively. $\mathbf{I}_N$ denotes the $N \times N$ identity matrix. $\mathbf{x}\! \sim\! \mathcal{CN}(\mathbf{0},\sigma^2\mathbf{I}_N)$ represents that each element of $\mathbf{x}\! \in\! \mathbb{C}^{N\times 1}$ follows a complex Gaussian distribution with mean $0$ and variance $\sigma^2$. The operator $\text{diag}(\mathbf{x})$ transforms vector $\mathbf{x}$ into a diagonal matrix.
$\|\mathbf{X}\|$ denotes the Frobenius norm of matrix $\mathbf{X}$.
$\lfloor\cdot \rfloor$ is the floor function that rounds a real number to the nearest integer less than or equal to that number. $|\cdot |$ denotes the modulus operation. $\mathbb{E}\{\cdot\}$ denotes the expectation operation. 
$\langle \cdot \rangle_N$ denotes the modulo operation with divisor $N$. $\Re \{\cdot\}$ denotes the real part of a complex number.

\begin{figure}[!h]
\vspace*{-3mm}
\center
\includegraphics[width=1\linewidth, keepaspectratio]{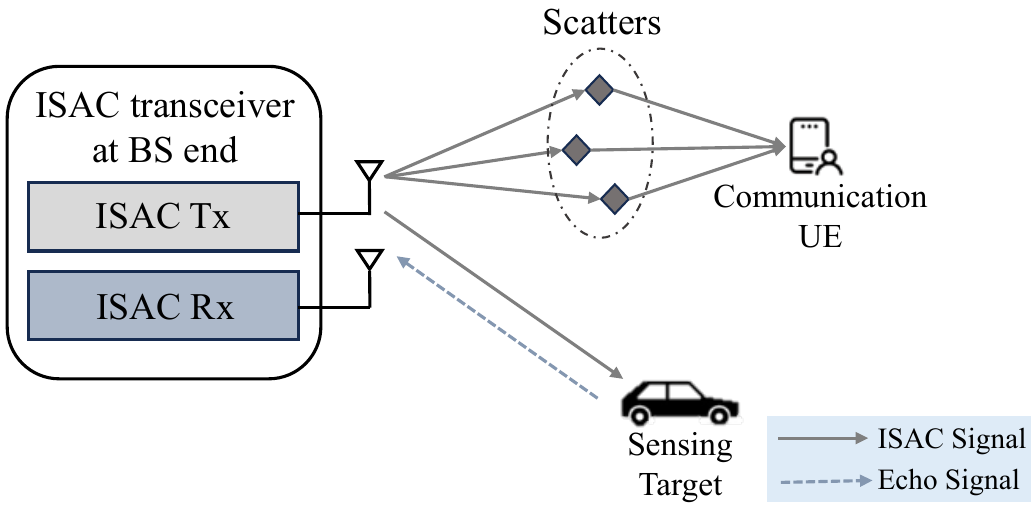}	
\vspace*{-6mm}
\caption{Typical monostatic ISAC scenario.}
\label{fig1}
\vspace*{-2mm}
\end{figure} 

\section{System Model}\label{S2}
We consider a typical monostatic ISAC scenario as illustrated in Fig.~\ref{fig1}, where the base station (BS) simultaneously transmits data to user equipment (UE) while leveraging the received echo signals to detect the presence of targets in the surrounding environment. The communication channel refers to the direct link between the BS and the UE, which consists of multiple clustered paths formed by scatterers between the base station and the user. The sensing channel represents the target reflection path observed at the BS, which is composed of reflection paths generated by the target. These two channels are independent and uncorrelated in our considered model.Additionally, when a target is detected, the BS estimates its distance and relative radial velocity.
Perfect cancellation of the resultant self-interference is assumed with sufficient isolation between the ISAC transmitter (Tx) and receiver (Rx), achieved by active radio-frequency cancellation methods or other related techniques \cite{STAR}.

\subsection{Signal Model}\label{S2.1}

The superimposed pilot scheme \cite{super_pilot} is considered for the AFDM-enabled ISAC system owing to its high spectral efficiency. The $N_c$-length transmit vector in the DAFT domain is expressed as the superposition of the pilot and data parts:
\begin{equation}\label{signal} 
  \mathbf{x} = \mathbf{x}_{\rm{p}}+\mathbf{x}_{\rm{d}},
\end{equation}
where $\mathbf{x}_{\rm{p}}\! \in\! \mathbb{C}^{N_c\times 1}$ and $\mathbf{x}_{\rm{d}}\! \in\! \mathbb{C}^{N_c\times 1}$ denote the pilot and data signals, respectively. The pilot signal is deterministic with power $\sigma_{\rm{p}}^2\! =\! \mathbf{x}_{\rm{p}}^{\rm H}\mathbf{x}_{\rm{p}}$, while the data counterpart is stochastic with $\mathbb{E}\big\{\mathbf{x}_{\rm{d}}\mathbf{x}_{\rm{d}}^{\rm H}\big\}\! =\! \sigma_{\rm{d}}^2\mathbf{I}_{N_{c}}$. The overall signal power is given by 
\begin{equation}\label{total_power} 
  P_{\rm{t}} = \sigma_{\rm{p}}^2 + N_c\sigma_{\rm{d}}^2.
\end{equation}
By performing the inverse DAFT on $\mathbf{x}$, the transmit signal in the time domain can be written as \cite{afdm_signal}
\begin{align}\label{single} 
	s[n] \!=& \frac{1}{\sqrt{N_c}} \!\!\sum_{m=0}^{N_c-1}\!x[m]e^{\textsf{j}2\pi\eta_m[n]}, ~n = 0,\cdots,N_c-1 ,
\end{align}
with 
\begin{align}\label{eq-eta} 
	\eta_m[n] = c_1 n^2 + \frac{1}{N_c}mn + c_2 m^2,
\end{align}
where $c_1$ and $c_2$ are the DAFT parameters. To guarantee the optimal diversity order for AFDM, $c_2$ should be an irrational number, and $c_1$ should satisfy
\begin{align}\label{eq-c1} 
  \frac{2\nu_{\rm{m}}+1}{2N_c} \leq c_1 \leq \frac{N_c}{2N_c(\tau_{\rm{m}}+1)}, 
\end{align}
where $\tau_{\rm{m}}$ and $ \nu_{\rm{m}}$ denote the maximum normalized delay and Doppler shift, respectively. The lower bound is set to ensure that the effective channels corresponding to different paths do not overlap in the DAFT domain, while the upper bound is introduced to constrain the range of the maximum delay to avoid ambiguity in the DAFT representation of the effective channel. The process in (\ref{single}) can be written in matrix form as
\begin{align}\label{matrix_form} 
  \mathbf{s} =& \mathbf{\Lambda}^{\rm H}_{c_2} \mathbf{F}^{\rm H} \mathbf{\Lambda}^{\rm H}_{c_1}\mathbf{x} \triangleq \mathbf{A}^{\rm H}\mathbf{x},
\end{align}
where $\mathbf{s}\! =\! \left[s[0],s[1],\cdots,s[N_c-1]\right]^{\rm T}\! \in\! \mathbb{C}^{N_c\times 1}$, $\mathbf{F}$ is the $N_c$-point discrete Fourier transform (DFT) matrix, and $\mathbf{\Lambda}_{c_i}\! =\! \text{diag}\bigg(\Big[1,e^{\textsf{j}2\pi c_i},\cdots,e^{\textsf{j}2\pi c_i (N_c-1)^2}\Big]^{\rm T}\bigg) \! \in\! \mathbb{C}^{N_c\times N_c}$, while $\mathbf{A}\! =\! \mathbf{\Lambda}_{c_1}\mathbf{F}\mathbf{\Lambda}_{c_2}\! \in\! \mathbb{C}^{N_c\times N_c}$.

\begin{figure}[!b]
\vspace*{-4mm}
\center
\includegraphics[width=0.8\linewidth, keepaspectratio]{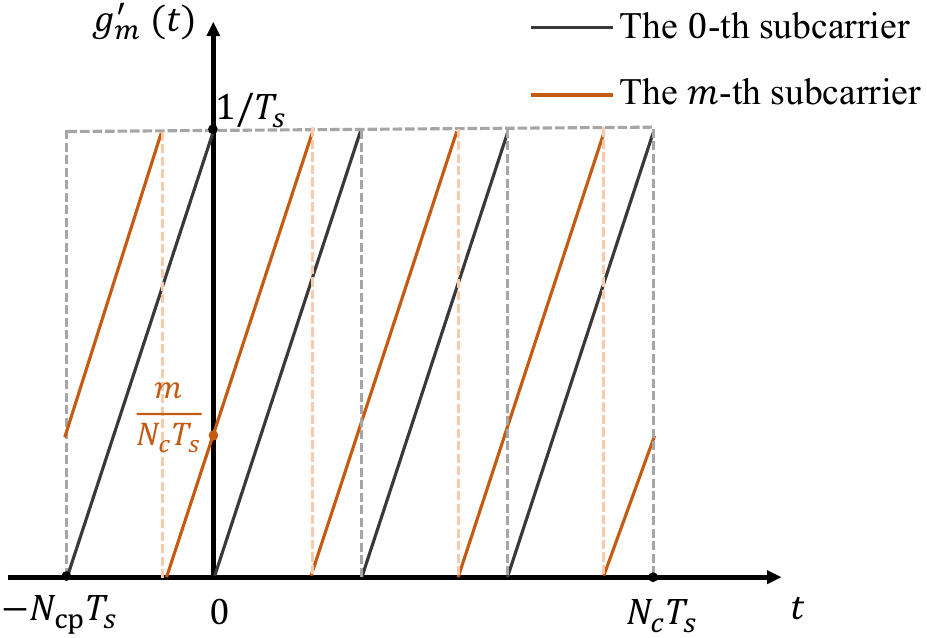}	
\vspace*{-3mm}
\caption{The frequency of the AFDM subcarriers with respect to time.}
\label{fig2}
\vspace*{-1mm}
\end{figure} 

To combat the inter-symbol interference (ISI) incurred by multipath propagation, a chirp-periodic prefix (CPP) is inserted in front of each AFDM symbol. The length of the CPP, denoted as $N_{\rm{cp}}$, is chosen to be larger than both the maximum integer delay of the communication channel and the maximum integer round trip delay of the sensing target. According to the periodicity of AFDM, the CPP can be expressed as
\begin{align}\label{cpp} 
	s[n] \!=& s[N_c + n] e^{-\textsf{j}2\pi c_1( N_c^2 +  2N_cn )}, \, n\! = \!-N_{\rm{cp}},\!\cdots\!,-\!1.
\end{align}
After adding the CPP, the discrete signals are fed into a digital-to-analog converter (D/A) to obtain the continuous transmit signal as
\begin{align}\label{continous} 
	s(t)\!=& \frac{1}{\sqrt{N_{c}}} \!\sum_{m=0}^{N_{c}\!-1} \! x[m] e^{\mathrm{j} 2 \pi\left( g_m(t-N_{\rm{cp}}T_{s}) + c_{2} m^{2}\right)},  
\end{align}
where  $t\! \in\! \Big[ 0, \, \left(N_{c} + N_{\rm{cp}}\right)T_{s} \Big)$ and $T_s$ is the sampling period, while $g_m(t)$ represents the instantaneous phase of the $m$-th subcarrier, and its derivative $g'_m(t)$ represents the instantaneous frequency as illustrated in Fig.~\ref{fig2}.
Due to the frequency wrapping property of AFDM chirp subcarriers, $g_m(t)$ can be expressed as
\begin{equation}\label{eq-Ph} 
  g_m(t) = c_{1} \left(\frac{t}{T_{s}}\right)^{2}\!+\frac{t}{N_{c}T_{s}} m
  - q_m\left(\frac{t}{T_s}\right)\frac{t}{T_s} ,
\end{equation}
where $q_m\left(\frac{t}{T_s}\right) = \big\lfloor 2c_1\frac{t}{T_s} + \frac{m}{N_c} \big\rfloor$.

\begin{figure*}[!t]
\vspace*{-1mm}
\center
\includegraphics[width=0.9\linewidth, keepaspectratio]{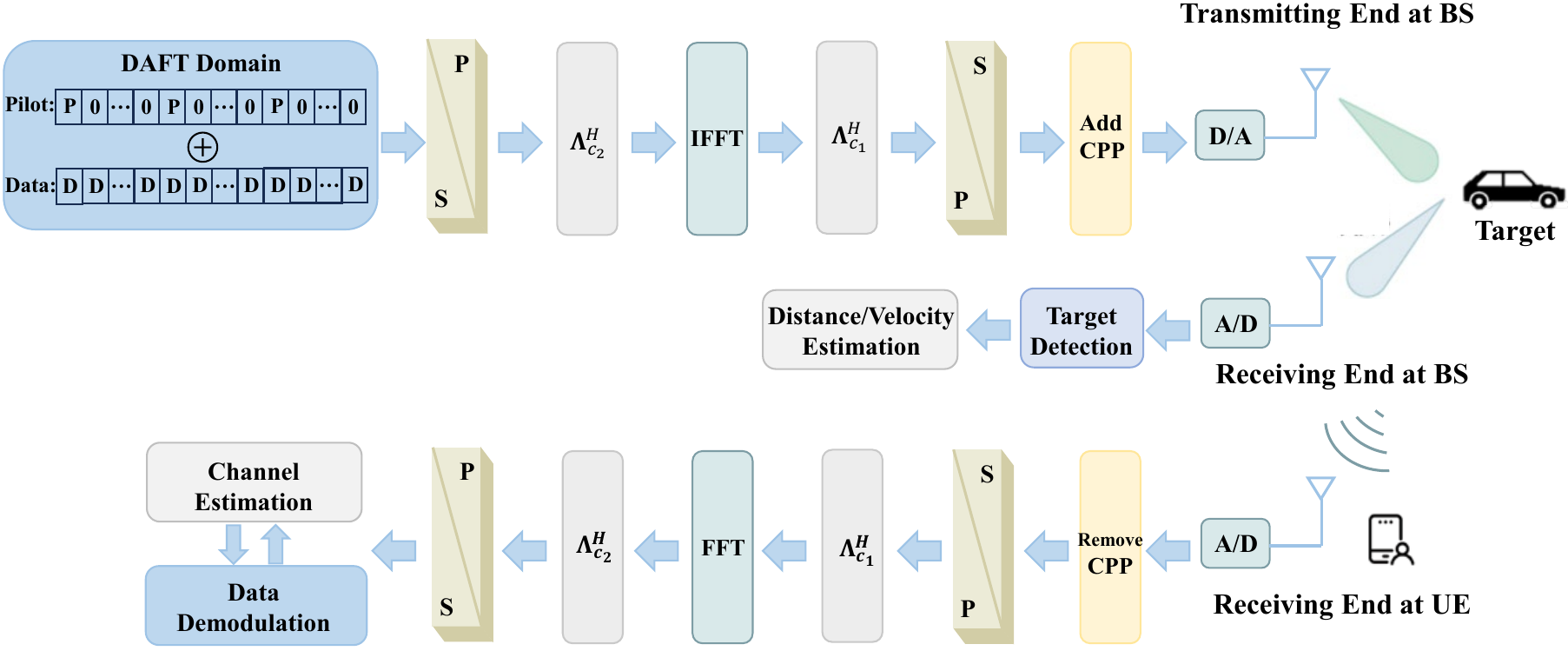}	
 \vspace*{-2mm}
\caption{Transceiver model of the AFDM-enabled ISAC system, where the signal reflected by the target is received by the receiving end at the BS for sensing purposes, and the signal through the downlink communication channel is received by the receiving end at the UE.} 
\label{fig3}
\vspace*{-3mm}
\end{figure*}

\subsection{Channel Estimation and Data Demodulation}\label{S2.2}

The multi-path time-varying communication channel is considered. By denoting the complex gain, delay and Doppler shift of the $i$-th path as $\alpha_{i}$, $\mu_i$ and $f_i$, respectively, the channel coefficient of the $i$-th path can be expressed as
\begin{align}\label{eqh} 
	h_{i}(t, \mu) = \alpha_{i}e^{\textsf{j}2\pi f_it}\delta(\mu - \mu_i).
\end{align}
After the analog-to-digital conversion, discarding CPP and performing $N_c$-point DAFT, the effective channel matrix of the $i$-th path in the DAFT domain can be written as \cite{AFDM}
\begin{align}\label{Hi} 
  \mathbf{H}_{i} =& \alpha_{i} \mathbf{A}\mathbf{\Gamma}_{\text{CPP}_i}\mathbf{\Pi}^{\tau_i}\mathbf{\Delta}_{\nu_i}\mathbf{A}^{\rm H},
\end{align} 
where $\tau_i\! =\! \lfloor \mu_i/T_s\rfloor\! \in\! [0,\, \tau_{\rm{m}}]$ denotes the normalized delay and $\nu_i\! =\! f_iN_cT_s\! =\! f_i / \Delta f\! \in\! [-\nu_{\rm{m}},\, \nu_{\rm{m}}]$ denotes the normalized Doppler shift, while $\Delta f\! =\! 1/N_cT_s$ is the subcarrier spacing. 
Besides, $\mathbf{\Gamma}_{\text{CPP}_i}\! \in\! \mathbb{C}^{N_c \times N_c}$ is the CPP matrix given by
\begin{align}\label{eq-Gamma} 
  \mathbf{\Gamma}_{\text{CPP}_i} =& \text{diag}\Bigg(
    \begin{cases}
      e^{-\textsf{j}2\pi c_1( N_c^2 +  2N_c(n-\tau_i) )}, ~  n < \tau_i , \\
      1, \quad n \geq \tau_i ,
    \end{cases}\!\!\!\! \Bigg),  
\end{align}
where $n$ is the index of the diagonal elements, $\mathbf{\Pi}$ is the forward cyclic-shift matrix written as
\begin{align}\label{eq-Pi} 
  \mathbf{\Pi} = \left[
    \begin{array}{cccc}
      0 & \cdots & 0  & 1\\
      1 & \cdots & 0  & 0\\
      \vdots & \ddots & \vdots  & \vdots\\  
      0 & \cdots & 1  & 0
    \end{array}
  \right],
\end{align}
and $\mathbf{\Delta}_{\nu_i}\! =\! \text{diag}\big(1,e^{\textsf{j}2\pi \nu_i/N_c},\cdots,e^{\textsf{j}2\pi \nu_i (N_c-1)/N_c}\big)$ stands for the Doppler shift matrix.

Since the channel matrix for the path with a fractional normalized Doppler shift can be represented as a linear combination of matrices for paths with integer normalized Doppler shift, we use all integer Doppler paths as the basis to express the communication channel. Considering the maximum delay and Doppler shift, there are at most $L_{\rm{m}}\! =\! (2\nu_{\rm{m}}+1)(\tau_{\rm{m}}+1)$ paths. Therefore, the received signal $\mathbf{y}\! \in\! \mathbb{C}^{N_c\times 1}$ in the DAFT domain can be expressed as 
\begin{align}\label{r1} 
  \mathbf{y} =& \mathbf{H}_{\rm{eff}}\mathbf{x} + \mathbf{w}_{\rm{c}} = \sum_{i=1}^{L_{\rm{m}}} \mathbf{H}_{i}\mathbf{x} + \mathbf{w}_{\rm{c}} \nonumber \\
  =& \sum_{i=1}^{L_{\rm{m}}}\alpha_i \mathbf{\Phi}_i(\mathbf{x}_{\rm{p}}+\mathbf{x}_{\rm{d}}) + \mathbf{w}_{\rm{c}}, 
\end{align}
where $\mathbf{w}_{\rm{c}}\! \sim\! \mathcal{CN}(\mathbf{0},\sigma_{\rm{cn}}^2\mathbf{I}_N)$  represents the noise, and $\mathbf{\Phi}_i\! =\! \mathbf{A}\mathbf{\Gamma}_{\text{CPP}_i}\mathbf{\Pi}^{\tau_i}\mathbf{\Delta}_{\nu_i}\mathbf{A}^{\rm H}$ with $\tau_i$ and $\nu_i$ given by 
\begin{align} 
  \tau_i =& \left\lfloor \frac{i-1}{2\nu_{\rm{m}}+1} \right \rfloor, \label{eq-tau} \\
   \nu_i =& \langle{i-1}\rangle_{2\nu_{\rm{m}}+1} - \nu_{\rm{m}}. \label{eq-nu}
\end{align}  

To estimate the effective channel $\mathbf{H}_{\rm{eff}} $, we employ the pilot signal to obtain the complex gain of each possible path.
By denoting $\bm{\alpha}\! =\!\big[\alpha_1,\alpha_2, \cdots,\alpha_{L_{\rm{m}}}\big]^{\rm T}$, (\ref{r1}) can be rewritten as
\begin{align}\label{r2} 
  \mathbf{y} =& \big(\mathbf{\Psi}_{\rm{p}}+\mathbf{\Psi}_{\rm{d}}\big)\bm{\alpha} + {\mathbf{w}}_{\rm{c}} =\mathbf{\Psi}_{\rm{p}} \bm{\alpha} + \hat{\mathbf{w}},
\end{align}
where $\mathbf{\Psi}_{\rm{p}}\! =\! \big[\mathbf{\Phi}_1\mathbf{x}_{\rm{p}}, \mathbf{\Phi}_2\mathbf{x}_{\rm{p}}, \cdots, \mathbf{\Phi}_ {L_{\rm{m}}}\mathbf{x}_{\rm{p}}\big]\! \in\! \mathbb{C}^{N_c \times L_{\rm{m}} }$ and $\mathbf{\Psi}_{\rm{d}}\! =\! \big[\mathbf{\Phi}_1\mathbf{x}_{\rm{d}}, \mathbf{\Phi}_2\mathbf{x}_{\rm{d}}, \cdots, \mathbf{\Phi_ {L_{\rm{m}}}}\mathbf{x}_{\rm{d}}\big]\! \in\! \mathbb{C}^{N_c \times L_{\rm{m}}}$, while $\hat{\mathbf{w}} \triangleq \mathbf{\Psi}_{\rm{d}} \bm{\alpha} + \mathbf{w}_{\rm{c}}$ defines the effective noise with the covariance matrix $\mathbf{C}_{\hat{\mathbf{w}}}$. 
Using the minimum mean square error (MMSE) algorithm yields the estimated complex gain vector: 
\begin{align}\label{eq-MMSE} 
  \hat{\bm{\alpha}} =& \left(\mathbf{\Psi}_{\rm{p}}^{\rm H}\mathbf{C}_{\hat{\mathbf{w}}}^{-1}\mathbf{\Psi}_{\rm{p}} + \mathbf{C}_{\bm{\alpha}}^{-1} \right)^{-1} \mathbf{\Psi}_{\rm{p}}^{\rm H}\mathbf{C}_{\hat{\mathbf{w}}}^{-1}\mathbf{y}, 
\end{align}
where $\mathbf{C}_{\bm{\alpha}}\! =\! E\big\{\bm{\alpha}\bm{\alpha}^{\rm H}\big\}\! =\! \text{diag}\Big(\big[\sigma_{\alpha_1}^2, \sigma_{\alpha_2}^2, \cdots, \sigma_{\alpha_{L_{\rm{m}}}}^2\big]^{\rm T}\Big)$ is the covariance matrix of $\bm{\alpha}$.
We define a path indicator vector $\mathbf{b}\! =\! \big[b_1, b_2, \cdots, b_{L_{\rm{m}}}\big]^{\rm T}$, where $b_i$ indicates whether the $i$-th path exists or not, i.e., 
\begin{align}\label{eq-IndC} 
  b_i = \begin{cases}
    1 ,~\text{if} , \hat{\alpha}_i > \epsilon , \\
    0 ,  ~\text{otherwise}, 
  \end{cases}
\end{align}
in which $\epsilon$ is a predefined threshold. Then the estimated effective channel is given by 
\begin{align}\label{eq-CEf} 
  \hat{\mathbf{H}}_{\rm{eff}} =& \sum_{i=1}^{L_{\rm{m}}} b_i \hat{\alpha}_i \mathbf{\Phi}_i.
\end{align}
The communication data are demodulated using the estimated channel. The channel estimation and data demodulation can be iteratively operated to reduce the mutual interference between the data and pilot signals, thereby enhancing the channel estimation accuracy and data detection performance \cite{super_pilot}.

\subsection{Target Sensing}\label{S2.3}

Assume that the transmit signal is reflected by a point target with the distance $R$ and relative radial velocity $V$. The received echoes at the BS can be expressed as
\begin{equation}\label{eq-Echo} 
  r_{\rm{s}}(t) = \beta s(t- \bar{\mu}) e^{\textsf{j}2\pi \bar{f}t} + w_{\rm{s}}(t),
\end{equation}
where $\bar{\mu}\! =\! \frac{2R}{c}$ and $\bar{f}\! =\! \frac{2Vf_c}{c}$ with $f_c$ and $c$ being the carrier frequency and the speed of light, respectively, $\beta$ is the complex gain, and $w_{\rm{s}}(t)\! \sim\! \mathcal{CN}(0,\sigma_{\rm{s}}^2)$ denotes the thermal noise plus the clutters.
After passing through the analog-to-digital converter, the received signal is sampled with the period $T_s$, yielding
\begin{equation}\label{e13} 
  r_{\rm{s}}[n] = r_{\rm{s}}(nT_s) = \beta s(nT_s - \bar{\mu})  e^{\textsf{j}2\pi \bar{\nu}n/N_c} + w_{\rm{s}}[nT_s],
\end{equation}
where $\bar{\nu}\! =\! \bar{f}/\Delta f$ is the normalized Doppler shift. After the CPP removal, the target sensing can be performed using detection algorithms, such as cyclic correlation method \cite{correlation}, 2D-FFT \cite{2d_fft} or MUSIC schemes \cite{music}. 
To simultaneously obtain the distance and velocity information of the target, we consider a Doppler compensation-based correlation method, which yields the range-Doppler function (RDF) written as
\begin{align}\label{rdm} 
  E(\tau, \nu) =& \sum_{n=0}^{N_c-1} r_{\rm{s}}^*[n] s[n-\tau] e^{\textsf{j}2\pi {\nu} n/N_c} ,
\end{align}
where $\tau\! \in\! [0,\, \tau_{\rm{m}}]$ is the index of the correlation function and $\nu\! \in\! [-\nu_{\rm{m}},\, \nu_{\rm{m}}]$ is the normalized frequency offset for Doppler compensation.
Based on the RDF, the target can be detected via the hypothesis test. Let the test threshold be $\gamma$. The hypothesis can be expressed as
\begin{align}\label{eqHtest} 
	\text{H}_1:\frac{|E(\tau,\nu)|^2}{N(\tau,\nu)} >\gamma,\quad \text{H}_0:\frac{|E(\tau,\nu)|^2}{N(\tau,\nu)} < \gamma,
\end{align}
where $\text{H}_1$ and $\text{H}_0$ represent the presence and absence of the target, respectively, $N(\tau,\nu)$ denotes the average noise power at point $(\tau,\nu)$, which can be calculated by averaging $|E(\tau,\nu)|^2$ in the whole or local region in the delay-Doppler domain \cite{rdm}.
If the hypothesis $\text{H}_1$ holds true at point $(\hat{\tau},\hat{\nu})$, i.e., $(\hat{\tau},\hat{\nu})$ corresponds to a target, then the distance and velocity of the target can be derived as
\begin{align} 
  \hat{R} =& \frac{c\hat{\tau}T_s}{2}, \quad\hat{\tau} = [0,\Delta \tau, 2\Delta \tau, \cdots, \tau_{\rm{m}}],  \label{eqE-d} \\
 \hat{ V} =&
   \frac{c\hat{\nu}\Delta f}{2 f_{\rm{c}}}, \quad  \hat{\nu} = [-\nu_{\rm{m}},\cdots,-\Delta \nu, 0,\Delta \nu, \cdots,\nu_{\rm{m}}],
\end{align}
where $\Delta \tau$ and $\Delta \nu,$ are the resolutions for delay and Doppler shift estimations, respectively. According to (\ref{rdm}), $\Delta \tau$ can be reduced by oversampling $r_{\rm{s}}[n]$ and $s[n]$, and $\Delta \nu$ can be reduced with a finer-grained Doppler shift compensation.

\section{Theoretical Framework for AFDM-Enabled ISAC}\label{S3}

The schematic diagram of the transceiver for the proposed AFDM-enabled ISAC system is depicted in Fig.~\ref{fig3}. To investigate the sensing capability of the AFDM waveform, we derive its CRBs for distance and velocity estimations, and analyze its advantage over classical OFDM and OCDM.

\begin{figure*}[!t]\setcounter{equation}{32}
\vspace*{-1mm}
\begin{align}\label{fim} 
  \mathbf{F}_{\mathbf{I}} \approx & \frac{2}{\sigma_{\rm{s}}^2} \left[
    \begin{array}{ccc}
    \frac{P_t}{N_c} & 0 & 0 \\
    0 & \beta^2\frac{(2\pi)^2}{N_c}\sum_n\sum_m P_m(\mathcal{F}_m(n\!-\!\bar{\tau}))^2 & \beta^2\frac{(2\pi)^2}{N_c}\sum_n\sum_m P_m( -\mathcal{F}_m(n-\bar{\tau}))(\frac{n}{N_c}) \\
    0 & \beta^2 \frac{(2\pi)^2}{N_c}\sum_n\sum_m P_m( -\mathcal{F}_m(n-\bar{\tau}))(\frac{n}{N_c}) & \beta^2\frac{(2\pi)^2}{N_c}\sum_n\sum_m P_m (\frac{n}{N_c})^2  
\end{array}
  \right] .
\end{align}
\vspace*{-1mm}
\hrulefill
\vspace*{-1mm}
\begin{align} 
  \text{CRB}_{\bar{\tau}} \approx& \frac{\frac{\sigma_{\rm{s}}^2N_c}{8\pi^2\beta^2} \sum_n\sum_m P_m (\frac{n}{N_c})^2 }{ \Big( \sum_n\sum_m P_m(\mathcal{F}_m(n-\bar{\tau}))^2  \Big) \left(\sum_n\sum_m P_m (\frac{n}{N_c})^2 \right) - \left( \sum_n\sum_m P_m( \mathcal{F}_m(n-\bar{\tau}))(\frac{n}{N_c})\right)^2} , \label{crbt} \\ 
  \text{CRB}_{\bar{\nu}} \approx&  \frac{ \frac{\sigma_{\rm{s}}^2N_c}{8\pi^2\beta^2} \sum_n\sum_m P_m(\mathcal{F}_m(n-\bar{\tau}))^2}{\Big( \sum_n\sum_m P_m(\mathcal{F}_m(n-\bar{\tau}))^2 \Big) \left(\sum_n\sum_m P_m (\frac{n}{N_c})^2 \right) - \left( \sum_n\sum_m P_m( \mathcal{F}_m(n-\bar{\tau}))(\frac{n}{N_c})\right)^2}. \label{crbn}
\end{align}
\vspace*{-1mm}
\hrulefill
\vspace*{-3mm}
\end{figure*}

\subsection{Sensing CRBs Analysis}\label{S3.1}

Let $\bm{\theta}\! =\! [\beta,\bar{\tau},\bar{v}]^{\rm T}$ be the unknown parameters for target sensing. According to (\ref{e13}), by denoting $\bar{s}[n]\! =\! \beta s(n T_s- \bar{\tau}T_s) e^{\textsf{j}2\pi \bar{\nu}n/N_c}$, the received echo signal in matrix form can be expressed as\setcounter{equation}{26}
\begin{align}\label{eq-EcoM} 
  \mathbf{r}_{\rm{s}} = \bar{\mathbf{s}} + \mathbf{w}_{\rm{s}},
\end{align}
where $\bar{\mathbf{s}}\! =\! [\bar{s}[0],  \cdots, \bar{s}[N_c\!-\!1]]^{\rm T}$, $\mathbf{r}_{\rm{s}}\! =\! [r_{\rm{s}}[0], \cdots, r_{\rm{s}}[N_c\!-\!1]]^{\rm T}$ and $\mathbf{w}_{\rm{s}}\! =\! [w_{\rm{s}}[0], \cdots, w_{\rm{s}}[N_c\!-\!1]]^{\rm T}\! \sim\! \mathcal{CN}(\mathbf{0},\sigma_{\rm{s}}^2 \mathbf{I}_{N_c})$.
The log-likelihood function of $\mathbf{r}_{\rm{s}}$ with respect to $\bm{\theta}$ is defined by
\begin{align}\label{eq-LLF} 
  \log p(\mathbf{r}_{\rm{s}},\bm{\theta}) = -\frac{1}{\sigma_{\rm{s}}^2}(\mathbf{r}_{\rm{s}} - \bar{\mathbf{s}})^{\rm H}(\mathbf{r}_{\rm{s}} - \bar{\mathbf{s}}) - C_1,
\end{align}
where $C_1\! =\! N_c \log(\sqrt{2\pi}\sigma_{\rm{s}})$ is a constant term. To obtain the Fisher information matrix (FIM), denoted as $\mathbf{F}_{\mathbf{I}}$, we calculate the expectation of the partial derivatives of the log-likelihood function with respect to the elements of $\bm{\theta}$ as follows
\begin{align}\label{e20} 
  F_{I}[i,j] =& -\mathbb{E}\left\{\frac{\partial^2 \log p(\mathbf{r}_{\rm{s}},\bm{\theta})}{\partial \theta_{i}\partial\theta_{j}^{*}}\right\} \nonumber \\
	=& \frac{2}{\sigma_{\rm{s}}^2} \Re \left\{ \mathbb{E}
  \bigg\{ \bigg(\frac{\partial \bar{\mathbf{s}}}{\partial \theta_{j}}\bigg)^{\rm H}\bigg(\frac{\partial \bar{\mathbf{s}}}{\partial \theta_{i}}\bigg) \bigg\}\right\} \nonumber \\
  =& \frac{2}{\sigma_{\rm{s}}^2} \Re \left\{ \mathbb{E}
    \bigg\{ \sum_{n=0}^{N_c-1}
   \left(\frac{\partial {\bar{s}}[n]}{\partial \theta_{j} } \right)^*\left(\frac{\partial {\bar{s}}[n]}{\partial \theta_{i} } \right) \bigg\}
    \right\} ,
\end{align}
where $\theta_{i}=\theta[i]$. The derivatives with respect to $\beta$, $\bar{\nu}$ and $\bar{\tau}$ can be deduced respectively as
\begin{align} 
  \frac{\partial{\bar{s}[n]}}{\partial \beta} =& s(n T_s- \bar{\tau}T_s) e^{\textsf{j} 2\pi \bar{\nu} n/N_c}, \label{e21} \\
  \frac{\partial{\bar{s}[n]}}{\partial \bar{\nu}} =& \beta (\textsf{j} 2\pi n/N_c) s(n T_s - \bar{\tau}T_s) e^{\textsf{j} 2\pi \bar{\nu} n/N_c}, \label{e22} \\
  \frac{\partial{\bar{s}[n]}}{\partial \bar{\tau}} \approx & \frac{\beta e^{\textsf{j}2\pi \bar{\nu}n/N_c}}{\sqrt{N_{c}}} \!\!\sum_{m=0}^{N_{c}-1} \! \Big(\! - \textsf{j} 2 \pi \mathcal{F}_m(n -\bar{\tau}) \!\Big) x[m]  \nonumber \\
  & \times e^{\textsf{j} 2 \pi\left( g_m(nT_s- \bar{\tau}T_s) + c_{2} m^{2}\right)}, \label{e23}
\end{align}
where $\mathcal{F}_m(n-\bar{\tau})\! =\! \big( 2c_1(n\! -\! \bar{\tau})\! +\! \frac{m}{N_c}\big) - \big\lfloor 2c_1(n\! -\! \bar{\tau})\! +\! \frac{m}{N_c}\big\rfloor$ represents the fractional part of $2c_1(n\! -\! \bar{\tau})\! +\! \frac{m}{N_c}$. The derivations of $\frac{\partial{\bar{s}[n]}}{\partial {\beta}}$ and $\frac{\partial{\bar{s}[n]}}{\partial \bar{\nu}}$ are straightforward. The derivation of $\frac{\partial{\bar{s}[n]}}{\partial \bar{\tau}}$ can be found in Appendix~\ref{app0}.

Substituting (\ref{e21})--(\ref{e23}) into (\ref{e20}) yields the approximate FIM for $\bm{\theta}$ (\ref{fim}) given at the top of the next page, where $P_m\! =\! \mathbb{E}\{|x[m]|^2\}$ is the allocated power on the $m$-th subcarrier and $P_t = \sum_{m=0}^{N_c-1}P_m$ is the total power of one AFDM symbol. The derivation of $\mathbf{F}_{\mathbf{I}}$ is given in Appendix~\ref{app0}.

The CRBs of $\bar{\tau}$ and $\bar{\nu}$, denoted as $\text{CRB}_{\bar{\tau}}$ and $\text{CRB}_{\bar{\nu}}$, can be derived by calculating the inverse of $\mathbf{F}_{\mathbf{I}}$, which are given in (\ref{crbt}) and (\ref{crbn}), respectively, at the top of this page.
Specifically, by substituting $P_m\! =\! \mathbb{E}\{|x_{\rm{d}}[m]|^2\}$ and $P_m\! =\! |x_{\rm{p}}[m]|^2 + \mathbb{E}\{|x_{\rm{d}}[m]|^2\}$ into the expression of sensing CRBs, we can obtain the sensing CRBs for full data ADFM signals and pilot-assisted AFDM signals, respectively. Accordingly, the CRBs of distance $R$ and velocity $V$ can be obtained as\setcounter{equation}{35}
\begin{align} 
  \text{CRB}_{R} =& \left(\frac{cT_s}{2}\right)^2\text{CRB}_{\bar{\tau}},  \label{crbr} \\
  \text{CRB}_{V} =& \left( \frac{c\Delta f}{2f_c}\right)^2\text{CRB}_{\bar{\nu}}. \label{crbv}
\end{align}
The above expressions for the CRBs are applicable not only to AFDM but also to OFDM and OCDM by choosing $c_1\! =\! 0$ and $c_1\! =\! \frac{1}{2N_c}$, respectively, in (\ref{crbr}) and (\ref{crbv}). 

\begin{figure}[!b]
\vspace*{-5mm}
\center
\includegraphics[width=1\linewidth, keepaspectratio]{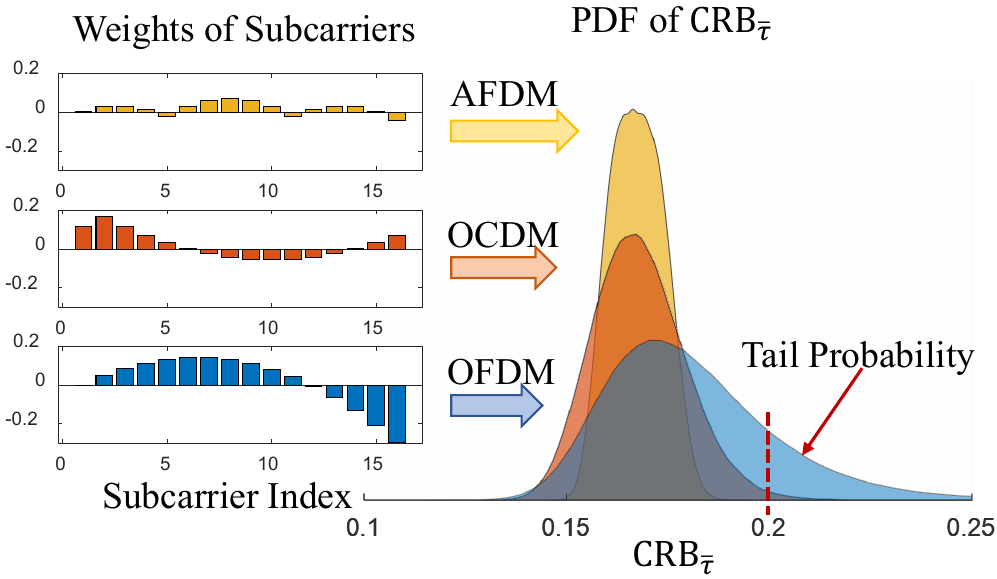}	
\vspace*{-7mm}
\caption{Comparison the PDFs of $\text{CRB}_{\bar{\tau}}$ for AFDM, OCDM and OFDM.}
\label{fig4}
\end{figure} 

Based on the above results, we investigate the robustness of the sensing CRBs for different waveforms to dynamic subcarrier power allocation strategies during communications.
We focus on $\text{CRB}_{\bar{\tau}}$. A similar conclusion can be drawn for $\text{CRB}_{\bar{\nu}}$ by using the same method, which therefore is omitted.
To characterize the sensitivity of $\text{CRB}_{\bar{\tau}}$ to the power variation of each subcarrier, we define the sensing weight of the $m$-th subcarrier as the partial derivative of $\text{CRB}_{\bar{\tau}}$ with respect to $P_m$ under equal power allocation, which is given by
\begin{align}\label{eq-CRB-va} 
  \delta_m= &\frac{\partial \text{CRB}_{\bar{\tau}}}{\partial P_m} \Big |_{(P_0,P_1,\cdots,P_{N_c-1})=\big(\frac{P_t}{N_c},\frac{P_t}{N_c},\cdots,\frac{P_t}{N_c}\big)}.  
\end{align}
When the sensing weights of subcarriers are similar and close to zero, the importance of each subcarrier to target sensing becomes approximately equal. In this case, the impact of the subcarrier power allocation variation on the sensing performance can be considered negligible. 
Conversely, if the weights differ significantly among subcarriers, their importance to target sensing varies greatly, making the sensing performance more sensitive to the subcarrier power allocation strategy.
The left section of Fig.~\ref{fig4} illustrates the subcarrier weights for AFDM ($c_1=\frac{2 \nu_{\rm{m}}+1}{2N_c}$), OCDM ($c_1=\frac{1}{2N_c}$) and OFDM ($c_1=0$), where the number of subcarriers is set to $N_c=16$. 
It can be observed that AFDM achieves the most balanced subcarrier weights with values closest to $0$, while OFDM exhibits the largest fluctuations in weight factors.
This means that $\text{CRB}_{\bar{\tau}}$ of AFDM maintains a relatively stable value even under significant variations in power allocation. 
Therefore, AFDM demonstrates more robust sensing performance in the scenarios with flexible power allocation, such as multi-user scenarios with dynamic demands.

To further explain the above conclusion, the right section of Fig.~\ref{fig4} presents the probability density function (PDF) of $\text{CRB}_{\bar{\tau}}$. 
The PDF is obtained by calculating and statistically analyzing the values of $\text{CRB}_{\bar{\tau}}$ under various random subcarrier power allocation strategies.
It can be seen that although the mean values of $\text{CRB}_{\bar{\tau}}$ for the three waveforms are similar, their variances differ significantly. OFDM exhibits the highest tail probability due to its largest variance, which increases the likelihood of degraded sensing performance. 
In contrast, AFDM has a negligible tail on the PDF curve, which makes it more likely to maintain the desired level of sensing performance. 

\subsection{Ambiguity Function Analysis}\label{S3.2}
 
The CRB only provides a theoretical lower bound of sensing deviation. To explore the inherent characteristics of AFDM waveform in terms of both sensing and communication capabilities, we derive the ambiguity function of the pilot-assisted AFDM waveform and analyze its statistical properties. 

Ambiguity function is defined as the two-dimensional correlation function in the delay-Doppler domain, which is essential for evaluating the sensing performance of various waveforms. Ambiguity functions can be classified into continuous and discrete ones, where the latter is considered as a sampled version of the former at integer points. For simplicity, this paper mainly investigates the discrete ambiguity function $\chi(\tau,\nu)$ of the $N$-length sequence {\color{red}$s(n)$}, calculated as \cite{af}
\begin{align}\label{caf} 
  \chi(\tau,\nu) =& \sum_{n=0}^{N-1} s^{*}[n]s[n-\tau]e^{\textsf{j}2\pi \nu n/N},
\end{align}
where $\tau$ and $\nu$ denote the integer delay and Doppler indices, respectively. Since the CPP part only introduces a constant multiplicative factor in the derivation of ambiguity function,  we omit it for brevity. Comparing (\ref{caf}) with (\ref{rdm}), 
the RDF $E(\tau,\nu)$ is equivalent to $\chi(\tau,\nu)$ shifted by $(\bar{\tau},\bar{\nu})$ and multiplied by a constant without consideration of noise.
As we focus on the region $\Omega_{\rm{R}}\!=\![0,\, \tau_{\rm{m}}]\!\times\![\!-\nu_{\rm{m}},\,\nu_{\rm{m}}]$ in the RDF, we only consider the region in $\chi(\tau,\nu)$ that can be shifted by $(\bar{\tau},\bar{\nu})$ into the $\Omega_{\rm{R}}$, which can be expressed as $\Omega_{\rm{A}}\!=\![-\tau_{\rm{m}},\, \tau_{\rm{m}}]\!\times\![\!-2\nu_{\rm{m}},\, 2\nu_{\rm{m}}]$.

By taking the expression of a single AFDM symbol into (\ref{caf}), the ambiguity function can be derived as
\begin{align}\label{single_a} 
	& \chi(\tau,\nu)\! =\! \frac{1}{N_c}\! \sum_{n=0}^{N_c\!-\!1}\! \bigg(\! \sum_{m_1=0}^{N_c\!-\!1} x^*[m_1] e^{-\textsf{j} 2\pi (c_1 n^2 + \frac{m_1 n}{N_c} + c_2 {m_1}^2 )}\! \bigg) \nonumber \\
  & \hspace*{3mm}\times \! \bigg(\!\sum_{m_2=0}^{N_c-1}\! x[m_2] e^{\textsf{j}2\pi(c_1 (n\!-\!\tau )^2 + \frac{m_2(n\! -\!\tau)}{N_c} + c_2 {m_2}^2)}\!\bigg) e^{\textsf{j}2\pi \nu n/N_c} \nonumber \\ 
  & = C_2\! \sum_{m_1=0}^{N_c-1}\! \sum_{m_2=0}^{N_c-1}\!\! x^{*}[m_1] x[m_2] S^{(\tau,\nu)}(m_1,m_2) e^{-\frac{\textsf{j}2\pi m_2 \tau}{N_c}}\! ,\!
\end{align}
where $C_2\! =\! e^{\textsf{j}2\pi c_1 {\tau}^2}/N_c$, and $S^{(\tau,\nu)}(m_1,m_2)$ is the interference coefficient between the $m_1$-th and $m_2$-th subcarriers given by
\begin{align}\label{eq-IintCoe} 
  S^{(\tau,\nu)}(m_1,m_2) \!\triangleq\!\! \sum_{n=0}^{N_c-1} \!e^{\textsf{j}2\pi\left(\!c_2 ({m_2}^2-{m_1}^2)-(2c_1\tau+\frac{m_1}{N_c}-\frac{m_2}{N_c}-\frac{\nu}{N_c}\!)n\!\right)}.
\end{align}
For simplicity, we assume that $2c_1N_c$ is always an integer \cite{AFDM}. Then $S^{(\tau,\nu)}(m_1,m_2)$ can be further calculated as
\begin{equation}\label{s} 
  S^{(\tau,\nu)}(\!m_1,m_2\!)\! \!=\!\!
   \begin{cases}
   \!N_ce^{\textsf{j}2\pi c_2 ({m_2}^2\!-\!{m_1}^2)\!},~ \text{if }  
   \langle  m_2\!\!-\!\!m_1\!\rangle_{N_c}\!\!=\!\text{loc}^{(\tau,\nu)}, \\
   \!0, ~\text{otherwise},
   \end{cases}
\end{equation}
where $\text{loc}^{(\tau,\nu)}\! =\! 2c_1\tau N_c\! -\! \nu$ denotes the subcarrier offset incurred by $\tau$ and $\nu$. 

Based on the aforementioned results, several theorems are provided to reveal the statistical properties of the ambiguity function of the considered AFDM waveform.

\begin{theorem}\label{T1} 
The ambiguity function of the pilot-assisted AFDM waveform can be expressed as
\begin{align}\label{e30} 
  & \chi(\tau,\nu) = \chi_{\rm{p}}(\tau,\nu) + \chi_{\rm{d}}(\tau,\nu) \nonumber \\
  & \hspace*{2mm} + C_2 \bigg(\! \sum_{m_1=0}^{N_c\!-\!1}\! \sum_{m_2=0}^{N_c\!-\!1}\! x_d^{*}[m_1] x_p[m_2] e^{-\textsf{j}2\pi\frac{m_2\tau}{N_c}} S^{(\tau,\nu)}(m_1,m_2)  \nonumber \\
  & \hspace*{2mm} + \!\sum_{m_1=0}^{N_c\!- \!1}\! \sum_{m_2=0}^{N_c\!-\!1}\! x_p^{*}[m_1] x_d[m_2] e^{-\textsf{j}2\pi\frac{m_2\tau}{N_c}} S^{(\tau,\nu)}(m_1,m_2)\!\! \bigg)\! ,\!
\end{align}
where $\chi_{\rm{p}}(\tau,\nu)$ and $\chi_{\rm{d}}(\tau,\nu)$ are the ambiguity functions of pilot and data signals, respectively. 
\end{theorem}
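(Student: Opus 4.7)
The plan is to exploit the fact that the ambiguity function expression (\ref{single_a}) is sesquilinear in the DAFT-domain transmit vector $\mathbf{x}$. Starting from (\ref{single_a}), which reads
\begin{equation*}
\chi(\tau,\nu) = C_2 \sum_{m_1=0}^{N_c-1}\sum_{m_2=0}^{N_c-1} x^{*}[m_1]\,x[m_2]\,S^{(\tau,\nu)}(m_1,m_2)\,e^{-\textsf{j}2\pi m_2\tau/N_c},
\end{equation*}
I would simply substitute the superimposed-pilot decomposition $\mathbf{x}=\mathbf{x}_{\rm p}+\mathbf{x}_{\rm d}$ from (\ref{signal}) into the conjugated and unconjugated slots of $x^{*}[m_1]x[m_2]$.

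Expanding the product componentwise yields four bilinear terms: $x_{\rm p}^{*}[m_1]x_{\rm p}[m_2]$, $x_{\rm d}^{*}[m_1]x_{\rm d}[m_2]$, $x_{\rm d}^{*}[m_1]x_{\rm p}[m_2]$, and $x_{\rm p}^{*}[m_1]x_{\rm d}[m_2]$. Because $C_2$, the kernel $S^{(\tau,\nu)}(m_1,m_2)$ from (\ref{eq-IintCoe})--(\ref{s}), and the phase factor $e^{-\textsf{j}2\pi m_2\tau/N_c}$ depend only on $(m_1,m_2,\tau,\nu)$ and not on the pilot/data split, the double sum distributes linearly over the four terms. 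Identifying the first sum as $\chi_{\rm p}(\tau,\nu)$ (the ambiguity function computed with $\mathbf{x}_{\rm p}$ in place of $\mathbf{x}$) and the second as $\chi_{\rm d}(\tau,\nu)$ directly from the definition (\ref{single_a}) applied to the pilot-only and data-only inputs, respectively, I recover exactly the first line of (\ref{e30}). The remaining two cross sums match, term for term, the bracketed expression on the second and third lines of (\ref{e30}).

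There is no substantive analytic obstacle here; the result is essentially the polarization identity for the sesquilinear form $\chi(\cdot,\cdot)$ induced by the fixed kernel $C_2\,S^{(\tau,\nu)}(m_1,m_2)e^{-\textsf{j}2\pi m_2\tau/N_c}$. The only care required is bookkeeping: making sure the conjugated index $m_1$ is consistently paired with the $(\cdot)^{*}$ slot and the unconjugated index $m_2$ with the $(\cdot)$ slot, so that the cross terms appear in the asymmetric form $x_{\rm d}^{*}x_{\rm p}$ and $x_{\rm p}^{*}x_{\rm d}$ rather than being collapsed prematurely. Once this is observed, the theorem follows directly, and the decomposition of (\ref{e30}) serves as the algebraic foundation for the subsequent statistical analysis of $\chi(\tau,\nu)$ in which the cross terms will be treated as zero-mean interference.
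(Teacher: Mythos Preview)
Your proposal is correct and follows essentially the same approach as the paper: substitute $x[m]=x_{\rm p}[m]+x_{\rm d}[m]$ into (\ref{single_a}), expand $x^{*}[m_1]x[m_2]$ into four bilinear terms, and identify the pilot--pilot and data--data sums as $\chi_{\rm p}(\tau,\nu)$ and $\chi_{\rm d}(\tau,\nu)$ by definition. The paper's proof is just this expansion written out explicitly, with no additional steps.
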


\begin{proof}
See Appendix~\ref{app1}.
\end{proof}

For brevity, we denote the third and forth terms of (\ref{e30}) as $\chi_{\rm{dp}}(\tau,\nu)$ and $\chi_{\rm{pd}}(\tau,\nu)$, and thus $\chi(\tau,\nu)$ can be expressed as 
\begin{align}\label{e30A} 
  \chi(\tau,\nu) =& \chi_{\rm{p}}(\tau,\nu) \!+\! \chi_{\rm{d}}(\tau,\nu) \!+\!\chi_{\rm{dp}}(\tau,\nu)\! +\! \chi_{\rm{pd}}(\tau,\nu).
\end{align}

To derive the expectation and variance of $\chi(\tau,\nu)$, we first present the following assumptions.
\begin{enumerate}
\item The constellation diagrams considered have at least three distinct phase values, and all the constellation
points are symmetrically distributed with respect to the center, leading to
\begin{align} 
  \mathbb{E}\{x_{\rm{d}}[m]\} =& 0, \, m = 0,1,\cdots, N_c-1, \label{eq-AM1-1} \\
  \mathbb{E}\{x_{\rm{d}}^2[m]\} =& 0, \, m = 0,1,\cdots, N_c-1. \label{eq-AM1-2}
\end{align}
Except for binary phase shift keying (BPSK), most common modulation formats, such as quadrature phase shift keying (QPSK) and 16-quadrature amplitude modulation (16-QAM), satisfy   assumption 1.
\item The symbols modulated on different subcarriers are independent and identically distributed, resulting in
\begin{align}\label{eq-AM2} 
  \mathbb{E}\{x_{\rm{d}}[m_1]x_{\rm{d}}^*[m_2]\} = 0, ~ m_1\neq m_2.
\end{align}
\end{enumerate}
Based on these two assumptions, the expectation of $\chi(\tau,\nu)$ is given by
\begin{align}\label{expectation} 
  \mathbb{E}\{\chi(\tau,\nu)\} = & \chi_{\rm{p}}(\tau,\nu) + \mathbb{E}\{\chi_{\rm{d}}(\tau,\nu)\} \nonumber \\
  = & \begin{cases}
        P_{\rm{t}}, ~ \text{if}~(\tau,\nu)\!\!=\!\!(0,0),\\
        \chi_{\rm{p}}(\tau,\nu) , ~\text{otherwise} ,
      \end{cases}
\end{align}
while the variance of $\chi(\tau,\nu)$ can be calculated as
\begin{align}\label{sigma} 
  \sigma^2_{\chi(\tau,\nu)} \!=\!& \begin{cases}
    2  \sigma_d^2  \sigma_p^2 
    \!+  \! (\mathbb{E}\{|x[0]|^4\}\!-\!\sigma_d^4)   N_c,~\text{if}(\tau,\nu)\!\!=\!\!(0,0), \\
    2\sigma_d^2    \sigma_p^2
    \!+ \!  \sigma_d^4 N_c ,  ~\text{otherwise}.
  \end{cases}
\end{align}
The derivation of $\mathbb{E}\{\chi(\tau,\nu)\}$ is straightforward, and the derivation of $\sigma^2_{\chi(\tau,\nu)}$ is given in Appendix~\ref{ApC}.

\begin{theorem}\label{T2} 
Given the number of subcarriers $N_c$, pilot power $\sigma^2_{\rm{p}}$ and total data symbol power $P_{\rm{d}}\! =\! N_c\sigma^2_{\rm{d}}$, the ambiguity function variance $\sigma^2_{\chi(\tau,\nu)}$ of constant-modulus modulation schemes (e.g., QPSK) is always no larger than that of non-constant modulus modulation counterparts for arbitrary $(\tau,\nu)$.
\end{theorem}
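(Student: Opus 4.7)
The plan is to exploit the fact that (\ref{sigma}) already isolates the modulation-dependent content into a single moment: the only quantity in $\sigma^2_{\chi(\tau,\nu)}$ that depends on the constellation shape is $\mathbb{E}\{|x_d[0]|^4\}$, which appears solely in the $(\tau,\nu)=(0,0)$ branch. Since the theorem fixes $N_c$, $\sigma_p^2$, and $P_d=N_c\sigma_d^2$, the value of $\sigma_d^2$ is pinned down across every candidate modulation family, so the comparison between constant-modulus (CM) and non-constant-modulus (non-CM) schemes collapses to a single scalar inequality on the fourth moment of the data symbol.

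First I would dispose of the off-origin case. For any $(\tau,\nu)\neq(0,0)$, the second branch of (\ref{sigma}) reads $\sigma^2_{\chi(\tau,\nu)}=2\sigma_d^2\sigma_p^2+\sigma_d^4 N_c$, which depends only on the fixed quantities. Thus both CM and non-CM schemes produce identical variances at every off-origin delay-Doppler bin, and the desired inequality holds with equality without any further work.

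Next, for $(\tau,\nu)=(0,0)$ the first branch of (\ref{sigma}) gives $\sigma^2_{\chi(0,0)}=2\sigma_d^2\sigma_p^2+(\mathbb{E}\{|x_d[0]|^4\}-\sigma_d^4)N_c$, and the task reduces to minimizing $\mathbb{E}\{|x_d[0]|^4\}$ subject to the fixed second moment $\mathbb{E}\{|x_d[0]|^2\}=\sigma_d^2$. I would apply Jensen's inequality to the convex map $t\mapsto t^2$ evaluated on the non-negative random variable $|x_d[0]|^2$, obtaining $\mathbb{E}\{|x_d[0]|^4\}\geq(\mathbb{E}\{|x_d[0]|^2\})^2=\sigma_d^4$, with equality if and only if $|x_d[0]|^2$ is almost surely constant, i.e., the constellation is constant-modulus (QPSK, 8-PSK, etc.). Non-CM constellations such as 16-QAM occupy more than one amplitude level and therefore satisfy $\mathbb{E}\{|x_d[0]|^4\}>\sigma_d^4$ strictly, yielding a strictly larger variance at the origin. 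Combined with the off-origin equality, this establishes $\sigma^2_{\chi,\text{CM}}(\tau,\nu)\leq\sigma^2_{\chi,\text{non-CM}}(\tau,\nu)$ for all $(\tau,\nu)$, as claimed.

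There is no substantive mathematical obstacle: once (\ref{sigma}) is accepted, the proof is essentially one invocation of Jensen's inequality. The only point meriting care is the interpretation of a \emph{fair} comparison between modulation families; the hypotheses must fix $P_d$ (equivalently $\sigma_d^2$), otherwise a trivial rescaling of a non-CM constellation could defeat the claim. Under the stated power-budget convention, the fourth-moment bound is simultaneously necessary and sufficient, making the theorem a structural consequence of the variance formula rather than a delicate estimate.
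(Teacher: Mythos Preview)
Your proposal is correct and essentially identical to the paper's own proof: both observe that the off-origin variance in (\ref{sigma}) is modulation-independent, and that at $(\tau,\nu)=(0,0)$ the problem reduces to minimizing $\mathbb{E}\{|x_d[0]|^4\}$ under a fixed second moment, with the minimum $\sigma_d^4$ attained exactly for constant-modulus constellations. The paper names this step Cauchy--Schwarz while you call it Jensen on $t\mapsto t^2$, but the two are the same inequality here.
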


\begin{proof}
See Appendix~\ref{ApD}.
\end{proof}

\begin{theorem}\label{T3}
Given the pilot power $\sigma^2_{\rm{p}}$ and total data power $P_{\rm{d}}$, when employing constant modulus modulation schemes, $\sigma^2_{\chi(\tau,\nu)}$ converges to zero as the number of subcarriers approaches infinity: 
\begin{align}\label{eq-T3} 
  \lim_{N_c \to \infty}\sigma^2_{\chi(\tau,\nu)} = 0 .
\end{align}
\end{theorem}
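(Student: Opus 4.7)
The plan is to start from the explicit variance formula in \eqref{sigma} and substitute the two structural facts that together make every term vanish: constant modulus of the data constellation, and the fixed-power constraint $P_{\rm{d}} = N_c\sigma_{\rm{d}}^2$.

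First I would separate the two cases in \eqref{sigma}. Under a constant modulus modulation, $|x_{\rm{d}}[m]|^2$ is deterministic and equals $\sigma_{\rm{d}}^2$, so $\mathbb{E}\{|x_{\rm{d}}[m]|^4\} = \sigma_{\rm{d}}^4$. Hence the ``extra'' term $\bigl(\mathbb{E}\{|x[0]|^4\}-\sigma_{\rm{d}}^4\bigr)N_c$ appearing at $(\tau,\nu)=(0,0)$ drops out identically, and both cases collapse to the single formula
\begin{equation*}
\sigma^2_{\chi(\tau,\nu)} \;=\; 2\sigma_{\rm{d}}^2\sigma_{\rm{p}}^2 + \sigma_{\rm{d}}^4 N_c .
\end{equation*}
This removes the $(\tau,\nu)$-dependence and is the key simplification enabled by Theorem~\ref{T2}.

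Next, I would invoke the assumption that the aggregate data power $P_{\rm{d}} = N_c\sigma_{\rm{d}}^2$ is held constant as $N_c$ grows, so that $\sigma_{\rm{d}}^2 = P_{\rm{d}}/N_c$. Substituting this identity yields
\begin{equation*}
\sigma^2_{\chi(\tau,\nu)} \;=\; \frac{2\sigma_{\rm{p}}^2 P_{\rm{d}}}{N_c} + \frac{P_{\rm{d}}^2}{N_c}.
\end{equation*}
Both summands are $O(1/N_c)$ with $\sigma_{\rm{p}}^2$ and $P_{\rm{d}}$ fixed, so letting $N_c \to \infty$ drives the right-hand side to zero uniformly in $(\tau,\nu)$, establishing \eqref{eq-T3}.

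There is essentially no hard step here; the proof is a short computation once Theorem~\ref{T2} and the fixed-power convention are in place. The only point worth being careful about is the interpretation: one must emphasise that $P_{\rm{d}}$ (not $\sigma_{\rm{d}}^2$) is the quantity held fixed, otherwise the $\sigma_{\rm{d}}^4 N_c$ term would actually diverge. I would therefore state this scaling convention explicitly before performing the substitution, and note that the same convention is what makes Theorem~\ref{T2} meaningful as a design guideline, so that Theorem~\ref{T3} reads as a clean asymptotic consequence of the constant-modulus choice rather than an artefact of notation.
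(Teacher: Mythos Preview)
Your approach is exactly the paper's: substitute $\mathbb{E}\{|x[0]|^4\}=\sigma_{\rm d}^4$ and $\sigma_{\rm d}^2=P_{\rm d}/N_c$ into \eqref{sigma} and observe that every surviving term is $O(1/N_c)$. One small slip: the two cases do \emph{not} collapse to the same expression. At $(\tau,\nu)=(0,0)$ the formula in \eqref{sigma} contains no $\sigma_{\rm d}^4 N_c$ term to begin with, so after the constant-modulus substitution you are left with only $2\sigma_{\rm d}^2\sigma_{\rm p}^2=2P_{\rm d}\sigma_{\rm p}^2/N_c$, while for $(\tau,\nu)\neq(0,0)$ you get $2P_{\rm d}\sigma_{\rm p}^2/N_c+P_{\rm d}^2/N_c$; the paper keeps these separate. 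Your merged expression happens to be the larger of the two, so the limit argument still goes through, but the sentence ``both cases collapse to the single formula'' should be corrected.
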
 

\begin{proof}
See Appendix~\ref{ApE}.
\end{proof} 
 
\begin{remark}\label{Rk1}
The inherent data randomness results in fluctuated mainlobe and sidelobes for the ambiguity function of AFDM-enabled ISAC signals, leading to possible missed detection or false alarms of targets. 
To reduce $\sigma^2_{\chi(\tau,\nu)}$ for sensing performance enhancement, we can employ constant modulus modulation schemes and increase the number of subcarriers according to Theorems~\ref{T2} and \ref{T3}. 
\end{remark}

\section{Pilot Design for Sensing and Communication Performance Enhancement}\label{S4}

In addition to the variance, the expectation of the ambiguity function is also crucial for the target sensing performance. To minimize the false alarm probability, an ideal ambiguity function should have a prominent peak at $(0,0)$ and zero value at other grid points within $\Omega_{\rm{A}}$, i.e.,
\begin{align}\label{expectation1} 
  \chi(\tau,\nu) = \begin{cases}
    P_{\rm{t}}, ~\text{if}~(\tau,\nu) = (0,0),\\
    0 , ~\text{if}~(\tau,\nu)\in \Omega_{A} \neq (0,0).
  \end{cases}
\end{align}
As can be seen from (\ref{expectation}), the expectation of the ambiguity function for the pilot-assisted AFDM signal is equal to the total power of the signal at $(0,0)$, while the values at other grid points are determined by the ambiguity function of the pilot signal. This phenomenon necessitates tailored design for the pilot sequence of the AFDM-enabled ISAC waveform.
Before elaborating on the design, the following theorem validates that the AFDM pilots with an ideal ambiguity function can guarantee desirable communication performance level.

\begin{theorem} \label{th4}
When $2c_1N_c$ is an integer, the pilot sequence with an ideal ambiguity function can achieve the optimal channel estimation performance by minimizing the mean squared error $\mathbb{E}\{\|\hat{\bm{\alpha}} - \bm{\alpha}\|^2\}$.
\end{theorem}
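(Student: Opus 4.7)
The plan is to reduce the statement to a trace-of-inverse minimization over the Gram matrix $\mathbf{\Psi}_{\rm{p}}^{\rm H}\mathbf{\Psi}_{\rm{p}}$, and then to identify the ideal pilot ambiguity function as the unique configuration that renders this Gram matrix a scaled identity. Starting from (\ref{eq-MMSE}), the Bayesian MSE takes the familiar form $\mathbb{E}\{\|\hat{\bm{\alpha}}-\bm{\alpha}\|^2\}=\text{tr}\bigl((\mathbf{\Psi}_{\rm{p}}^{\rm H}\mathbf{C}_{\hat{\mathbf{w}}}^{-1}\mathbf{\Psi}_{\rm{p}}+\mathbf{C}_{\bm{\alpha}}^{-1})^{-1}\bigr)$, so the pilot enters only through $\mathbf{\Psi}_{\rm{p}}$. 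A preliminary simplification exploits the unitarity of every factor in $\mathbf{\Phi}_i=\mathbf{A}\mathbf{\Gamma}_{\text{CPP}_i}\mathbf{\Pi}^{\tau_i}\mathbf{\Delta}_{\nu_i}\mathbf{A}^{\rm H}$: plugging $\mathbb{E}\{\mathbf{x}_{\rm{d}}\mathbf{x}_{\rm{d}}^{\rm H}\}=\sigma_{\rm{d}}^2\mathbf{I}$ into $\hat{\mathbf{w}}=\mathbf{\Psi}_{\rm{d}}\bm{\alpha}+\mathbf{w}_{\rm{c}}$ collapses $\mathbf{C}_{\hat{\mathbf{w}}}$ to the isotropic matrix $\sigma_{\hat w}^2\mathbf{I}$ with $\sigma_{\hat w}^2=\sigma_{\rm{d}}^2\sum_i\sigma_{\alpha_i}^2+\sigma_{\rm{cn}}^2$, decoupling noise whitening from the pilot choice.

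The central step is to identify the entries of $\mathbf{\Psi}_{\rm{p}}^{\rm H}\mathbf{\Psi}_{\rm{p}}$ with samples of the pilot ambiguity function. The $(i,j)$ element equals $(\mathbf{\Phi}_i\mathbf{x}_{\rm{p}})^{\rm H}(\mathbf{\Phi}_j\mathbf{x}_{\rm{p}})$; diagonal entries collapse to $\sigma_{\rm{p}}^2$ by unitarity, and for $i\neq j$ I would substitute $\mathbf{A}^{\rm H}\mathbf{x}_{\rm{p}}=\mathbf{s}_{\rm{p}}$, cancel $\mathbf{A}^{\rm H}\mathbf{A}$, and read the remaining operator $\mathbf{\Delta}_{\nu_i}^{\rm H}\mathbf{\Pi}^{-\tau_i}\mathbf{\Gamma}_{\text{CPP}_i}^{\rm H}\mathbf{\Gamma}_{\text{CPP}_j}\mathbf{\Pi}^{\tau_j}\mathbf{\Delta}_{\nu_j}$ as a shift-and-modulation with lag $(\tau_j-\tau_i,\nu_j-\nu_i)$, so that the inner product is, up to a unit-modulus chirp phase, the discrete cross-ambiguity of $\mathbf{s}_{\rm{p}}$ at that lag. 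The hypothesis $2c_1N_c\in\mathbb{Z}$ is what makes $\text{loc}^{(\tau,\nu)}=2c_1\tau N_c-\nu$ an integer, so the relative lags $(\tau_j-\tau_i,\nu_j-\nu_i)$ land on the integer DAFT grid for which (\ref{s}) and Theorem~\ref{T1} apply. An ideal pilot ambiguity function in the sense of (\ref{expectation1}) then annihilates every off-diagonal entry, yielding $\mathbf{\Psi}_{\rm{p}}^{\rm H}\mathbf{\Psi}_{\rm{p}}=\sigma_{\rm{p}}^2\mathbf{I}_{L_{\rm{m}}}$.

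To close the optimization, observe that unitarity also pins down $\text{tr}(\mathbf{\Psi}_{\rm{p}}^{\rm H}\mathbf{\Psi}_{\rm{p}})=L_{\rm{m}}\sigma_{\rm{p}}^2$ as an invariant across pilot choices, so any admissible $\mathbf{\Psi}_{\rm{p}}$ gives a Gram matrix whose eigenvalues $\lambda_i\ge 0$ sum to $L_{\rm{m}}\sigma_{\rm{p}}^2$. Since $\mathbf{C}_{\hat{\mathbf{w}}}^{-1}$ is isotropic and $\mathbf{C}_{\bm{\alpha}}^{-1}$ is diagonal, the scalar-identity Gram matrix makes the full argument diagonal and the MSE separable, equal to $\sum_i 1/(\sigma_{\rm{p}}^2/\sigma_{\hat w}^2+1/\sigma_{\alpha_i}^2)$; the matrix AM--HM inequality (equivalently, Jensen applied to the convex map $x\mapsto 1/(x+c)$ with $c\ge 0$) certifies this value as the global minimum of $\text{tr}\bigl((\mathbf{\Psi}_{\rm{p}}^{\rm H}\mathbf{\Psi}_{\rm{p}}/\sigma_{\hat w}^2+\mathbf{C}_{\bm{\alpha}}^{-1})^{-1}\bigr)$ under the fixed-trace constraint. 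Hence the ideal-ambiguity pilot attains the MMSE minimum.

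The hard part will be step two: keeping careful book on the CPP phase factors $\mathbf{\Gamma}_{\text{CPP}_i}^{\rm H}\mathbf{\Gamma}_{\text{CPP}_j}$ so that the off-diagonal Gram entries really match the pilot cross-ambiguity in modulus and do not acquire path-dependent phases that would disrupt orthogonality. The integer condition $2c_1N_c\in\mathbb{Z}$ is essential here: without it $\text{loc}^{(\tau,\nu)}$ is fractional, the Kronecker-like support in (\ref{s}) degrades, and the equivalence between \emph{ideal ambiguity function} and \emph{orthogonal columns of} $\mathbf{\Psi}_{\rm{p}}$ breaks down.
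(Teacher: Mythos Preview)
Your plan is essentially the paper's: show that the off-diagonal Gram entries $(\mathbf{\Phi}_i\mathbf{x}_{\rm p})^{\rm H}(\mathbf{\Phi}_j\mathbf{x}_{\rm p})$ are, up to a unimodular factor, samples of $\chi_{\rm p}$ at the relative lag $(\tau_j-\tau_i,\nu_j-\nu_i)\in\Omega_{\rm A}\setminus\{(0,0)\}$, and hence vanish for an ideal pilot. Two refinements are worth making.

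First, the step you flag as the hard part is in fact immediate under the integer hypothesis. Inspecting (\ref{eq-Gamma}), when $2c_1N_c\in\mathbb{Z}$ (with $N_c$ even, as in the paper) every diagonal phase of $\mathbf{\Gamma}_{\text{CPP}_i}$ equals $1$, so $\mathbf{\Gamma}_{\text{CPP}_i}=\mathbf{I}_{N_c}$ for all $i$; this is precisely how the paper exploits the assumption. With the CPP matrices gone, the commutation $\mathbf{\Pi}^{\tau}\mathbf{\Delta}_{\nu}=e^{-\textsf{j}2\pi\nu\tau/N_c}\mathbf{\Delta}_{\nu}\mathbf{\Pi}^{\tau}$ collapses the inner product directly to $e^{-\textsf{j}2\pi\nu_i(\tau_j-\tau_i)/N_c}\chi_{\rm p}(\tau_j-\tau_i,\nu_j-\nu_i)$, and no residual path-dependent phase survives. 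Your reading of the hypothesis (that it makes $\text{loc}^{(\tau,\nu)}$ integer so that (\ref{s}) applies) is also correct, but it is the CPP collapse that removes the bookkeeping you were worried about.

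Second, your optimality argument has a genuine gap. With $\mathbf{C}_{\bm{\alpha}}^{-1}=\text{diag}(1/\sigma_{\alpha_i}^2)$ non-scalar, Jensen on the eigenvalues of $G=\mathbf{\Psi}_{\rm p}^{\rm H}\mathbf{\Psi}_{\rm p}$ under only a fixed-trace constraint does \emph{not} yield $G=\sigma_{\rm p}^2\mathbf{I}$ as the minimizer: already among diagonal $G$ with $\sum_i g_i$ fixed, $\sum_i 1/(g_i+d_i)$ is minimized by a water-filling allocation, not by equal $g_i$. The fix is to invoke the stronger invariant you noted earlier but did not use in the optimization: every diagonal entry $G_{ii}=\|\mathbf{\Phi}_i\mathbf{x}_{\rm p}\|^2=\sigma_{\rm p}^2$ is individually fixed. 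Writing $M=G/\sigma_{\hat w}^2+\mathbf{C}_{\bm{\alpha}}^{-1}$, the diagonal of $M$ is therefore pinned, and Cauchy--Schwarz ($1=|e_i^{\rm H}M^{1/2}M^{-1/2}e_i|^2\le M_{ii}(M^{-1})_{ii}$) gives $(M^{-1})_{ii}\ge 1/M_{ii}$, hence $\text{tr}(M^{-1})\ge\sum_i 1/M_{ii}$ with equality iff $M$ is diagonal, i.e., iff $G=\sigma_{\rm p}^2\mathbf{I}$. The paper sidesteps this entirely by citing \cite{super_pilot} for the implication ``orthogonal columns of $\mathbf{\Psi}_{\rm p}\Rightarrow$ minimal MSE''; your self-contained derivation is a welcome addition once patched.
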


\begin{proof}
See Appendix~\ref{ApF}.
\end{proof}

To ensure an ideal ambiguity function for the pilot sequence, existing studies insert sufficient zero guard samples between adjacent non-zero pilots in the DAFT domain. 
Specifically, according to (\ref{s}), to ensure the interference coefficient between two adjacent non-zero pilots is always zero within $\Omega_{\rm{A}}$, the neighboring pilot spacing $Q$ should satisfy $Q\! >\! \max_{\tau,\nu} \text{loc}^{(\tau,\nu)}$.
Since $\Omega_{\rm{A}}\! = \![-\tau_{\rm{m}},\,\tau_{\rm{m}}]\!\times\![-2\nu_{\rm{m}},\, 2\nu_{\rm{m}}]$, the subcarrier offset $\text{loc}^{(\tau,\nu)}\! \in\! [-2c_1\tau_{\rm{m}} N_c-2\nu_{\rm{m}},\, 2c_1\tau_{\rm{m}} N_c+2\nu_{\rm{m}}]$, and the ambiguity function of the pilot sequence becomes ideal when the adjacent non-zero pilot spacing $Q$ satisfies:
\begin{equation}\label{q} 
  Q \geq 2c_1\tau_{\rm{m}} N_c+2\nu_{\rm{m}} + 1 .
\end{equation}
For example, in {\cite{super_pilot}}, $Q$ is set to $2c_1\tau_{\rm{m}} N_c+2\nu_{\rm{m}} +1$ and the number of non-zero pilots is chosen to be $N_{\rm{p}}\! =\!\big\lfloor\frac{N_c}{Q} \big\rfloor$.
Inversely, when the pilot spacing $Q$ is fixed, the maximum range of sensing and channel estimation is limited by
\begin{equation}\label{q1}  
  \tau_{\rm{m}} \leq \frac{Q-2\nu_{\rm{m}}-1}{2c_1 N_c}. 
\end{equation}
According to (\ref{q1}), small $Q$ limits the capability boundary of both target sensing and channel estimation. On the other hand, the margin $\tau_{\rm{m}}$ can be expanded by enlarging Q, which however sacrifices the number of non-zero pilots within a single AFDM symbol, causing degraded channel estimation accuracy. This dilemma makes the pilot assignment for AFDM rather challenging.

To overcome this challenge, we propose a novel pilot design in Theorem~\ref{T5}, which includes the choice of the parameter $c_1$ and the pilot spacing configuration as well as the pilot sequence design.
The proposed pilot sequence features an ideal ambiguity function with the spacing $Q$ independent of $\tau_{\rm{m}}$. As a result, our design enables simultaneous channel estimation and long-range target sensing both with superior accuracy.

\begin{theorem}\label{T5}
Consider an AFDM symbol with $N_c\! =\! 2^p$ subcarriers, where $p$ is a non-zero integer. When $c_1\! =\! \frac{2^{q}}{2N_c}$ with the integer $q$ satisfying $2^{q-1}\! <\! 2\nu_{\rm{m}}+1\leq2^{q}$, $Q\! =\! 2c_1N_c{2^r}$ with $r\! =\! 0,1,\cdots ,p - q$, and $N_{\rm{p}}\! =\! N_c/Q$,
the pilot signal satisfying the following equation has an ideal ambiguity function:
\begin{align}\label{pilot2} 
  x_{\rm{p}}[m] \!=\!& \begin{cases}
     \sqrt{\frac{\sigma^2_{\rm{p}}}{N_{\rm{p}}}} z\Big[\!\frac{m}{Q}\!\Big] e^{\textsf{j}2\pi\psi[m]  } ,~ m=0,Q,\cdots,(N_{\rm{p}}\!\!-\!\!1)Q, \\
     0,  ~\text{otherwise},
   \end{cases}
\end{align}
where $\psi[m] = \left(\!\frac{m^2  {2^r}}{2QN_c} - c_2 {m}^2\!\right)$, and $z[k]$ denotes any $N_{\rm{p}}$-length CAZAC sequence\cite{Popovic_tit_18}, such as the Zadoff-Chu (ZC) sequence\cite{zc}.
\end{theorem}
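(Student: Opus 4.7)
The plan is to substitute the proposed pilot into the general ambiguity-function formula \eqref{single_a}, restrict all sums to the pilot support $m\in\{0,Q,2Q,\ldots,(N_{\rm p}-1)Q\}$, and then verify separately the vanishing of $\chi_{\rm p}(\tau,\nu)$ on $\Omega_{\rm A}\setminus\{(0,0)\}$ and its peak value at the origin. The argument splits naturally according to which pairs $(\tau,\nu)$ can even admit a nonzero interference coefficient on the pilot grid.

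First I would determine when $S^{(\tau,\nu)}$ in \eqref{s} is allowed to be nonzero on the pilot support. Since both pilot indices are multiples of $Q=2^{q+r}$, $\langle m_2-m_1\rangle_{N_c}$ is automatically a multiple of $Q$, so the selection rule demands $2c_1\tau N_c-\nu=2^q\tau-\nu\equiv 0\pmod Q$. Reducing modulo $2^q$ and invoking the hypothesis $|\nu|\leq 2\nu_{\rm m}\leq 2^q-1$ forces $\nu=0$; reducing further modulo $2^{q+r}$ then forces $\tau$ to be a multiple of $2^r$ (a vacuous condition when $r=0$). This already makes $\chi_{\rm p}(\tau,\nu)=0$ identically for every $(\tau,\nu)\in\Omega_{\rm A}$ with $\nu\neq 0$, as well as for those $\tau$ not divisible by $2^r$ when $r\geq 1$.

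For the remaining case ($\nu=0$ and $\tau$ a nonzero multiple of $2^r$ with $|\tau|\leq\tau_{\rm m}$), I would substitute $x_{\rm p}[kQ]=\sqrt{\sigma_{\rm p}^2/N_{\rm p}}\,z[k]\,e^{\textsf{j}2\pi\psi[kQ]}$ together with the nonzero value $S^{(\tau,0)}(k_1Q,k_2Q)=N_c\,e^{\textsf{j}2\pi c_2 Q^2(k_2^2-k_1^2)}$. The central algebraic simplification is the identity $\psi[kQ]+c_2(kQ)^2=k^2\cdot 2^r/(2N_{\rm p})$, which is exactly what the $-c_2 m^2$ term in the proposed phase was engineered to produce: it cancels the irrational-$c_2$ factor inside $S^{(\tau,0)}$ and leaves a purely rational quadratic in $k$. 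Setting $\ell=\tau/2^r$ so that $k_2=\langle k_1+\ell\rangle_{N_{\rm p}}$, the coefficient of $k_1$ in the surviving exponent reduces to $(\ell\cdot 2^r-\tau)/N_{\rm p}$, which vanishes by the definition of $\ell$. The sum therefore collapses to the circular autocorrelation $\sum_{k_1=0}^{N_{\rm p}-1}z^*[k_1]\,z[\langle k_1+\ell\rangle_{N_{\rm p}}]$ of the CAZAC sequence at the nonzero shift $\ell\in\{1,\ldots,N_{\rm p}-1\}$, which is zero. Evaluating the same expression directly at $(\tau,\nu)=(0,0)$ yields $\chi_{\rm p}(0,0)=\sigma_{\rm p}^2$, producing the desired delta-like profile.

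The main obstacle I foresee is the bookkeeping associated with the circular wraparound $k_1+\ell\to k_1+\ell-N_{\rm p}$ whenever $k_1+\ell\geq N_{\rm p}$: after wraparound, both $k_2^2-k_1^2$ and $-k_2\tau/N_{\rm p}$ acquire additional $N_{\rm p}$-dependent terms, and I must show that the resulting contributions to the exponent, namely a term proportional to $2^{r-1}(N_{\rm p}-2k_1-2\ell)$ from the quadratic part and an additive $\tau$ from the linear part, are integers so that they disappear inside $e^{\textsf{j}2\pi(\cdot)}$. This reduces to the parity condition $N_{\rm p}=2^{p-q-r}$ with $p\geq q+r$, which guarantees $2^{r-1}N_{\rm p}\in\mathbb{Z}$ whenever $N_{\rm p}\geq 2$ (the degenerate case $N_{\rm p}=1$ contains no nonzero $\tau$ in $\Omega_{\rm A}$, so it may be dismissed). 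Once this verification is completed, no spurious phase factors remain, the CAZAC circular autocorrelation identity applies as above, and Theorem~\ref{T5} follows.
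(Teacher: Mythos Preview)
Your proposal is correct and follows essentially the same two-case strategy as the paper's proof: first argue that $S^{(\tau,\nu)}$ vanishes on the pilot grid unless $Q\mid\text{loc}^{(\tau,\nu)}$, then show that this forces $(\tau,\nu)=(2^r\ell,0)$ and reduce the remaining sum to a CAZAC circular autocorrelation at nonzero lag. Your modular-arithmetic derivation of $\nu=0$ and $2^r\mid\tau$ is in fact more explicit than the paper's appeal to the ``multipath separation ability'' of AFDM, and your treatment of the wraparound phase terms (verifying $2^{r-1}N_{\rm p}\in\mathbb{Z}$ when $N_{\rm p}\ge 2$) fills in a detail the paper passes over silently.
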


\begin{proof}
See Appendix~\ref{ApG}.
\end{proof}

\begin{remark}\label{Rm2}
The advantage of the proposed pilot design lies in the fact that $Q$ is independent of the maximum delay $\tau_{\rm{m}}$. As shown in Theorem~\ref{th4}, $Q$ can be as small as $2c_1 N_c$, where $c_1$ depends solely on the maximum Doppler shift. Therefore, as $\tau_{\rm{m}}$ increases, it is unnecessary to reduce the number of non-zero pilots, thus maintaining accurate channel estimation. 
\end{remark}

In the following, we will give a specific example to illustrate how to obtain a pilot sequence based on Theorem~5. Firstly, an appropriate $N_{\rm{p}}$-length CAZAC sequence should be chosen as the base sequence. Let us take the $N_{\rm{p}}$-length ZC sequence as an example. According to Theorem~5, $N_{\rm{p}}$ is an even number, and thus the $N_{\rm{p}}$-length ZC sequence can be written as
\begin{align}\label{pilot3}
z[ n ]=e^{-\frac{\textsf{j} \pi u n^2}{N}  } ,~n=0,1,\cdots,N_{\rm{p}}\!\!-\!\!1,
\end{align}
where $u$ is an integer satisfying $\gcd(N_{\rm{p}},u)=1$. Next, by substituting (\ref{pilot3}) in (\ref{pilot2}), the pilot signal can be expressed as 
\begin{align}\label{pilot4}  
  x_{\rm{p}}[m] \!=\! & \begin{cases}
     \sqrt{\frac{\sigma^2_{\rm{p}}}{N_{\rm{p}}}} e^{-\frac{\textsf{j} \pi u m^2}{NQ^2}  } e^{\textsf{j}2\pi\psi[m]  } ,~ m=0,Q,\cdots,(N_{\rm{p}}\!\!-\!\!1)Q, \\
     0,  ~\text{otherwise},
   \end{cases}
\end{align}
Finally, we superpose the pilot signal on the data as (\ref{signal}) to obtain the transmit signal.

It should be noted that the proposed pilot design methodology is not equivalent to directly transforming the CAZAC sequence into the DAFT domain. CAZAC sequences ensure perfect auto-correlation property only when the Doppler shift is zero, and their autocorrelation performance deteriorates significantly under large Doppler shifts. Consequently, the sequence derived from directly transforming a CAZAC sequence cannot guarantee an ideal ambiguity function. In contrast, our proposed pilot is designed by exploiting the unique characteristics of AFDM.
Due to its inherent multipath separation capability, interference between non-zero pilot symbols arises only from specific combinations of delay and Doppler shifts. We identify these critical delay-Doppler pairs and introduce appropriate phase shifts to the CAZAC sequences, enabling them to maintain ideal auto-correlation properties even under substantial Doppler shifts. 

\begin{figure}[!t]
\vspace*{-4mm}
\center
\includegraphics[width=.85\linewidth, keepaspectratio]{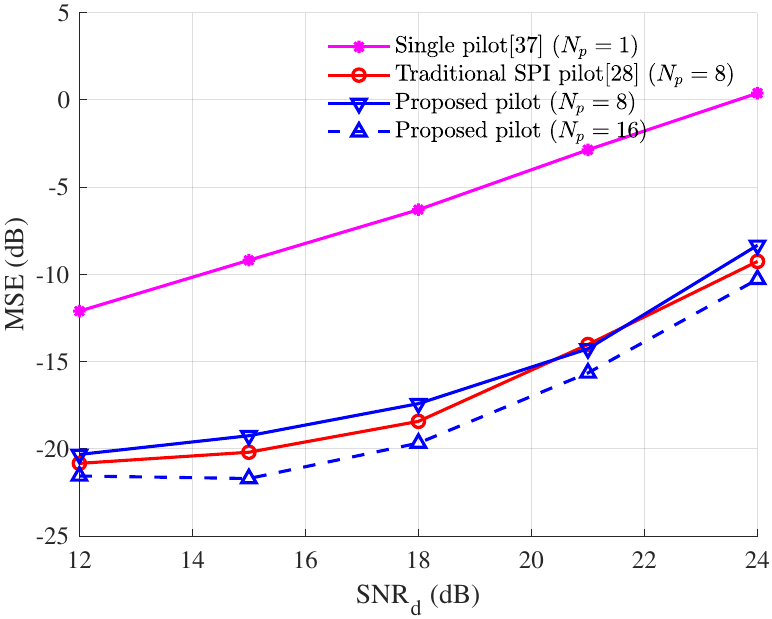}	
\vspace*{-3mm}
\caption{Comparison of channel estimation performance between the proposed pilot design and existing designs under the scenario of $\tau_{\rm{m}}\! =\! 2$.} 
\label{sim2} 
\vspace*{-1mm}
\end{figure} 

\section{Numerical Results}\label{S5}

\begin{table}[!b]
\caption{Simulations Parameters}
\centering
\begin{tabular}{|c|c|c|}
\hline
Symbol & Parameter & Value\\
\hline
$f_c$ & Carrier frequency & $28$ GHz\\
\hline
$N_c$ & Number of subcarriers & $128$ \\
\hline
$N_{cp}$ & Number of CPP & $32$ \\
\hline
$\Delta f$ & Subcarrier spacing & $100$ KHz \\
\hline
$\nu_{\rm{m}}$ & Maximum normalized Doppler shift & $2$ \\
\hline
$\tau_{\rm{m}}$ & Maximum normalized delay & $2$ or $15$ \\
\hline
$\sigma^2_{\rm{p}}$ & Power of pilot & $20$ dB\\
\hline
$\sigma^2_{\rm{cn}}$ & Power of noise & $30$ dBm\\
\hline
\end{tabular}
\end{table}

Consider an AFDM system with $N_c\! =\! 128$ subcarriers, where each subcarrier carries a symbol modulated using QPSK. The key simulation parameters are summarized as Table~~\uppercase\expandafter{\romannumeral1}. We consider both the large-range scenario ($\tau_{\rm{m}}\! =\! 15$) and the small-range scenario ($\tau_{\rm{m}}\! =\! 2$). The maximum normalized Doppler frequency shift is set to $\nu_{\rm{m}}\! =\! 2$. 
The communication channel consists of $L\! =\! 3$ paths, with the normalized delay and Doppler shift of each path randomly selected from $[0,\, \tau_{\rm{m}}]$ and $[-\nu_{\rm{m}},\, \nu_{\rm{m}}]$, respectively. 
The channel coefficient for each path follows a complex Gaussian distribution $\mathcal{CN}\! \sim\!  (0,1/L)$. 
For the sensing purpose, we consider a single-target model, where its normalized delay and Doppler shift are uniformly distributed within the ranges of $[0,\, \tau_{\rm{m}}]$ and $[-\nu_{\rm{m}},\, \nu_{\rm{m}}]$, respectively. Numerical results are provided to validate the feasibility and superiority of the proposed pilot design in terms of both communication and sensing performance.

\subsection{Communication Performance}\label{S5.1}

Figs.~\ref{sim2} and \ref{sim4} depict the channel estimation performances as the functions of the data signal-to-noise ratio $\text{SNR}_{\rm{d}}\! =\! \sigma_{\rm{d}}^2/{\sigma_{\rm{cn}}^2}$ for the proposed pilot design and two existing designs under the scenarios of $\tau_{\rm{m}}\! =\! 2$ and $\tau_{\rm{m}}\! =\! 15$, respectively. 
The total power of pilot and noise signals are set as $\sigma_{\rm{p}}^2=20$\,dB and $N_c\sigma_{\rm{cn}}^2=21$\,dBm, respectively. The mean square error (MSE) is utilized to evaluate the channel estimation performance. It can be obtained by calculating the Frobenius norm of the difference between the channel matrix and its estimated value, which can be expressed as
\setcounter{equation}{56}
\renewcommand\theequation{\arabic{equation}}
\begin{align}
    \text{MSE}=\sum_{i=0}^{N_{\text{iter}}}\frac{|| \mathbf{H}^{(i)}_{\rm{c}} - \hat{\mathbf{H}}^{(i)}_{\rm{c}}||}{N_{\text{iter}}}.
\end{align}
Here $\mathbf{H}^{(i)}_{\rm{c}}$ and $\hat{\mathbf{H}}^{(i)}_{\rm{c}}$ denote the channel matrix and its estimated value for the $i$-th simulation run, and the results are averaged over $N_{\text{iter}}=10000$ simulation runs.
In the single pilot design \cite{arxiv}, there is only one non-zero pilot in the DAFT domain, which is put in the first subcarrier. The traditional SPI pilot denotes the superimposed pilot in \cite{super_pilot} with $N_{\rm{p}}\! =\! 8$ non-zero pilots.
For the proposed pilot design, as shown in (\ref{pilot2}), we consider $N_{\rm{p}}\! =\! 8$ and $N_{\rm{p}}\! =\! 16$. For a fair comparison, all the schemes adopt the same AFDM parameters with $c_1\! =\! 4/N_c$ and $c_2\! = \!\pi\!-\!3$.

\begin{figure}[!t]
\vspace*{-4mm}
\center
\includegraphics[width=.85\linewidth, keepaspectratio]{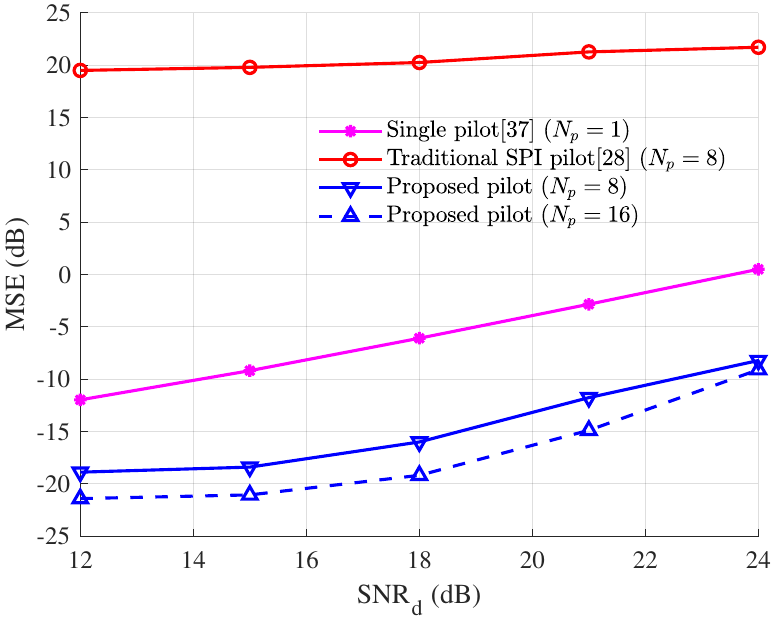}	
\vspace*{-3mm}
\caption{Comparison of channel estimation performance between the proposed pilot design and existing designs under the scenario of $\tau_{\rm{m}}\! =\! 15$.}
\label{sim4} 
\vspace*{-4mm}
\end{figure} 

It can be seen from Fig.~\ref{sim2} that as the data power increases, the equivalent noise power in channel estimation also increases, resulting in higher MSE for all the schemes. For the small-range scenario of $\tau_{\rm{m}}\! =\! 2$, both the proposed pilot design and the traditional SPI pilot with $N_{\rm{p}}\! =\! 8$ non-zero pilots achieve more accurate channel estimation results, with a $7$\,dB improvement in MSE compared to the single pilot scheme. This is because, when $\tau_{\rm{m}}$ is relatively small, both the proposed pilot and traditional SPI pilot designs can realize the optimal channel estimation with no inter-pilot interference. Moreover, since the number of non-zero pilots in these designs is greater than that in the single pilot scheme, the UE can utilize more subcarriers for channel estimation, thereby achieving higher channel estimation accuracy.
As expected, the proposed pilot design with $N_{\rm{p}}\! =\! 16$ outperforms the same design with $N_{\rm{p}}\! =\! 8$.
On the other hand, for the large-range scenario of $\tau_{\rm{m}}\! =\! 15$, the traditional SPI pilot with 
$N_{\rm{p}}\! =\! 8$ no longer satisfies the requirements of (\ref{q}). The received responses of two non-zero pilot symbols may overlap in the DAFT domain after passing through the channel, which leads to severe pilot interference and prevents the channel estimation from achieving the minimum mean square error performance.
In contrast, the proposed pilot design maintains an ideal ambiguity function and ensures the optimal channel estimation with the similar MSEs in the both scenarios, as can be observed from Figs.~\ref{sim2} and \ref{sim4}. 

\begin{figure}[!t]
\center
\includegraphics[width=.85\linewidth, keepaspectratio]{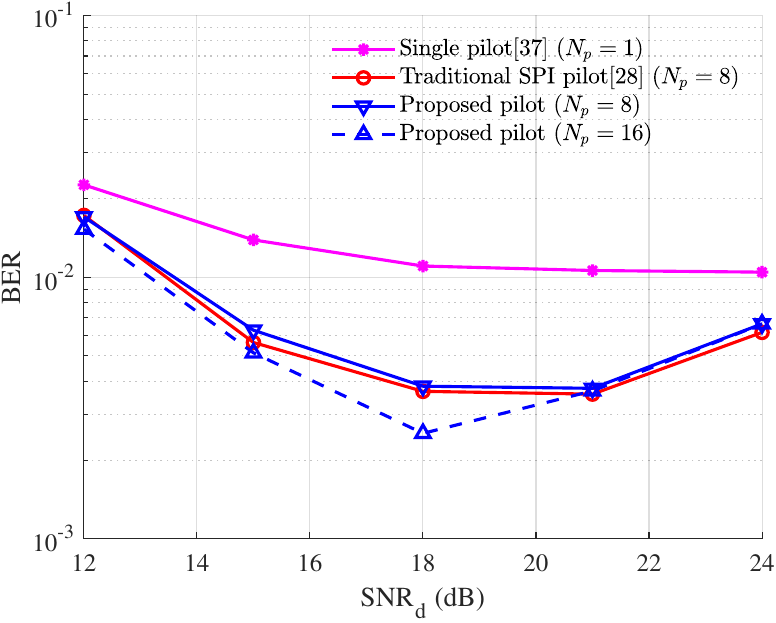}	
\vspace*{-3mm}
\caption{BER comparison between the proposed pilot design and existing designs under the scenario of $\tau_{\rm{m}}\! =\! 2$.}
\label{sim1} 
\vspace*{-4mm}
\end{figure} 

Figs.~\ref{sim1} and \ref{sim3} investigate the BER performance of the proposed pilot design and two existing designs under $\tau_{\rm{m}}\! =\! 2$ and $\tau_{\rm{m}}\! =\! 15$, respectively. 
Since the channel estimation performance of the single pilot design is considerably inferior to those of the proposed pilot and traditional SPI pilot designs, its BER is much larger than those of the other two designs.
As for our pilot and traditional SPI pilot designs, generally, for $\text{SNR}_{\rm{d}}\! <\! 18$\,dB, the BER reduces as $\text{SNR}_{\rm{d}}$ increases but for $\text{SNR}_{\rm{d}}\! >\! 21$\,dB, the BER increases as $\text{SNR}_{\rm{d}}$ increases. This is because for $\text{SNR}_{\rm{d}}\! <\! 18$\,dB, the channel estimation is relatively accurate, and increasing the data power improves the data demodulation performance. 
On the other hand, for $\text{SNR}_{\rm{d}}\! >\! 21$\,dB, as $\text{SNR}_{\rm{d}}$ increases, the equivalent noise power for the channel estimation increases, leading to a decline in channel estimation accuracy and a corresponding rise in the BER.
By comparing Figs.~\ref{sim1} and \ref{sim3}, it can be seen that as the maximum delay spread increases, the BER of the traditional SPI pilot increases significantly due to the degradation of its channel estimation performance. The proposed pilot design by contrast does not suffer from this performance degradation.

\begin{figure}[!h]
\center
\includegraphics[width=.85\linewidth, keepaspectratio]{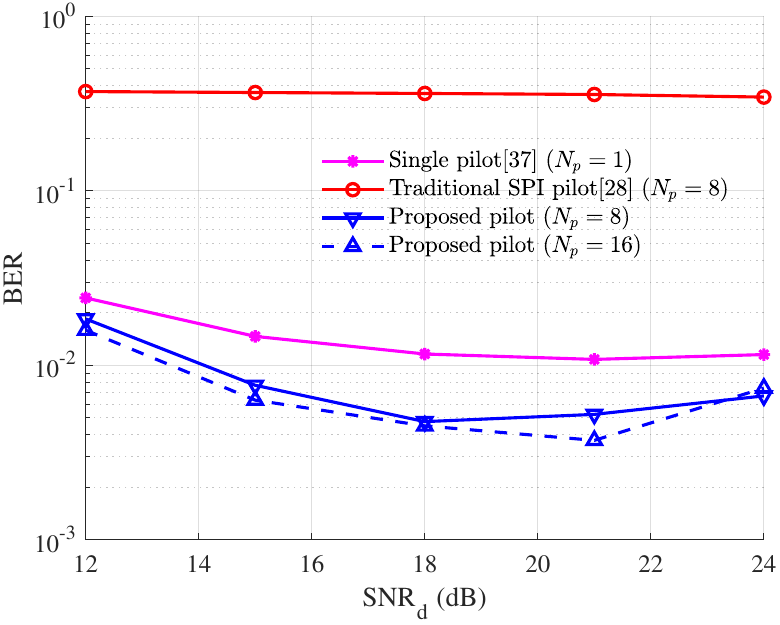}	
\vspace*{-3mm}
\caption{BER comparison between the proposed pilot design and existing designs under $\tau_{\rm{m}}=15$.}
\label{sim3} 
\end{figure} 

\subsection{Sensing Performance}\label{S5.2}

The ROC curves of the proposed pilot and traditional SPI pilot designs are illustrated in Fig.~\ref{sim5} under the large-range scenario of $\tau_{\rm{m}}\! =\! 15$, where the receive SNR at the ISAC receiver is set to $0$\,dB or $-10$\,dB. Both the proposed pilot and traditional SPI pilot designs have $N_{\rm{p}}\! =\! 8$ non-zero pilots.
To obtain the ROC curve, we vary the threshold $\gamma$ in (\ref{eqHtest}) and record the corresponding probabilities of false alarm and missed detection, as shown in Fig.~\ref{sim5}.
We define a missed detection event as either failing to detect the peak corresponding to the target or when the difference between the estimated normalized delay (or normalized Doppler shift) and the true value exceeds $1$.
Since the traditional SPI pilot cannot ensure an ideal ambiguity function under $\tau_{\rm{m}}\! =\! 15$, its ROC curve is much inferior to that of the proposed pilot design, i.e., the probability of missed detection of the traditional SPI pilot is much higher than that of the proposed pilot at the same false alarm probability.
Due to its ideal ambiguity function, the performance of the proposed pilot design is not limited by high sidelobes but is primarily determined by the receive SNR. As a result, when the SNR increases, the proposed scheme exhibits a significant improvement in sensing performance.

\begin{figure}[!t]
\center
\includegraphics[width=.85\linewidth, keepaspectratio]{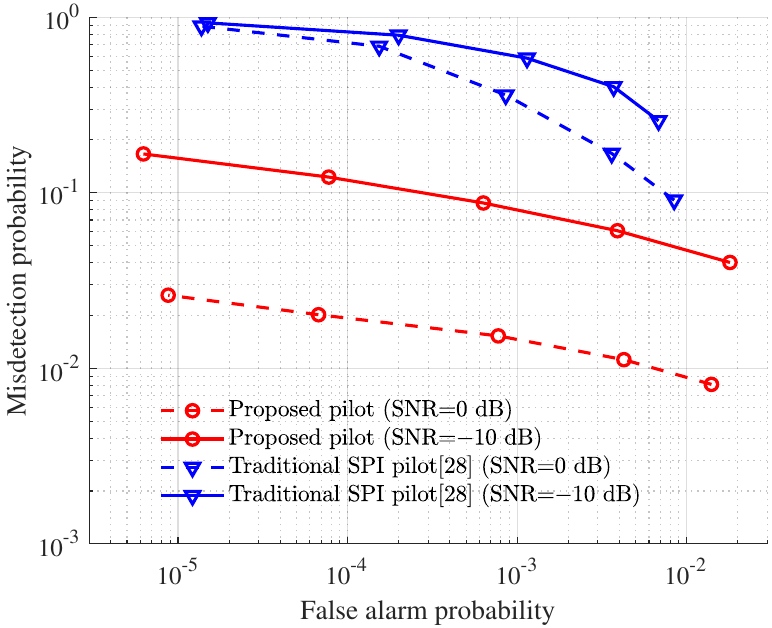}	
\vspace*{-3mm}
\caption{ROC curve comparison between the proposed pilot and traditional SPI pilot designs under $\tau_{\rm{m}}\! =\! 15$ and given two different receive SNRs.}
\label{sim5} 
\vspace*{-4mm}
\end{figure} 

Figs.~\ref{fig5}\,(a) and \ref{fig5}\,(b) compare the simulated distance root MSE ($\text{RMES}_{R}$) and velocity root MSE ($\text{RMES}_{V}$) with $\sqrt{\text{CRB}_{R}}$ and $\sqrt{\text{CRB}_{V}}$, respectively, where the carrier frequency is set to $f_c\! =\! 28$\,GHz and the sensing range is set to $\tau_{\rm{m}}\! =\! 15$. 
Specifically, $\sqrt{\text{CRB}_{R}}$ and $\sqrt{\text{CRB}_{V}}$ are calculated by averaging $\text{CRB}_{R}$ and $\text{CRB}_{V}$ for any possible $\bar{\tau}$, respectively, and deriving the square root of the results.
It can be seen from Fig.~\ref{fig5}\,(a) that as the sampling period decreases, the distance estimation error reduces, which agrees with (\ref{crbr}). 
From Fig.~\ref{fig5}\,(b), it can be seen that as the subcarrier spacing decreases, the velocity estimation error reduces, which agrees with (\ref{crbv}). 
Moreover, it can be observed that both $\text{RMES}_{R}$ and $\text{RMES}_{V}$ are close to the square roots of derived CRBs, demonstrating that the proposed pilot design enables high-precision sensing.


\begin{figure}[!t]
\vspace*{-1mm}
\center
 \subfigure[Comparison between $\rm{RMES}_{R}$ and  $\sqrt{\rm{CRB}_{R}}$.] {  
\includegraphics[width=0.85\linewidth, keepaspectratio]{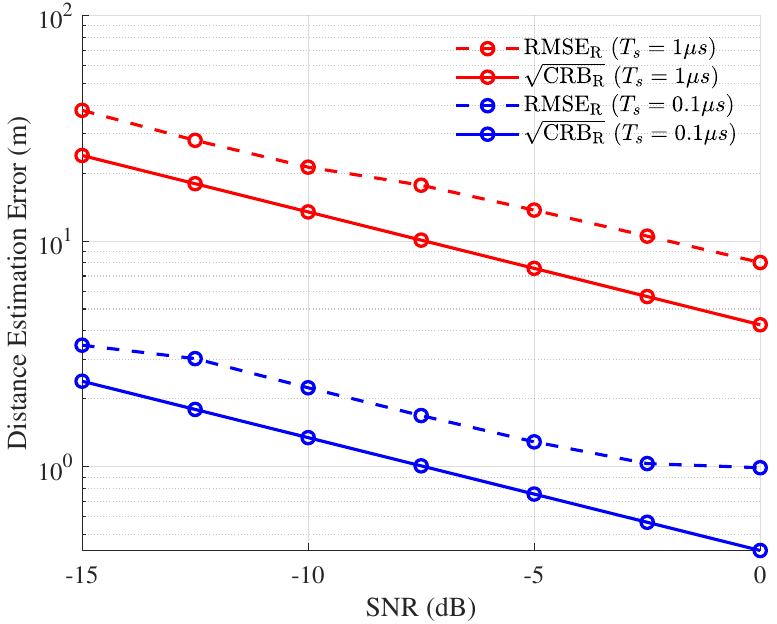}  }
 \hspace{-1mm}
 
 \subfigure[Comparison between $\rm{RMES}_{V}$ and $\sqrt{\rm{CRB}_{V}}$.] {   
 \includegraphics[width=0.85\linewidth, keepaspectratio]{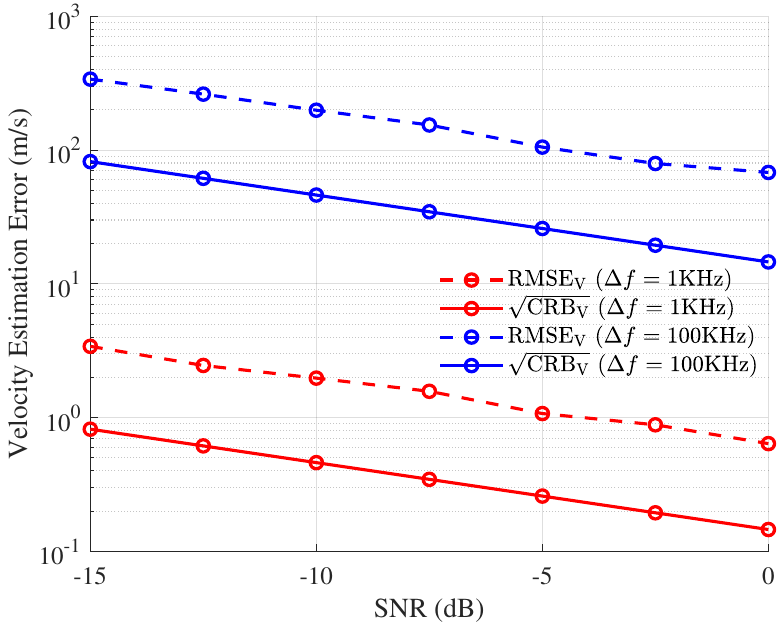}  
 }
\vspace*{-1mm}
\caption{Comparison of the simulated RMSE and the square root of CRB.}
\label{fig5}  
\vspace{-4mm}
\end{figure}  

\section{Conclusions}\label{S6}

In this paper, we have presented a theoretical framework for AFDM-enabled ISAC systems and introduced a novel pilot design for pilot-assisted AFDM waveforms to enhance both communication and sensing performance. Specifically, we derived closed-form CRB expressions for target sensing in AFDM, offering helpful guidelines for system design and insights into the advantages of AFDM over existing alternatives. 
Furthermore, we have formulated the ambiguity function of the pilot-assisted AFDM waveform for the first time and analyzed its statistical properties. To overcome the sensing range limitations imposed by pilot spacing, we have also proposed a novel superimposed pilot design, which has been analytically proven to achieve an ideal ambiguity function and optimal channel estimation simultaneously. Numerical results have been provided to validate the theoretical derivations and demonstrate the superiority of the proposed pilot design.
 
\appendix

\subsection{Derivation of FIM}\label{app0}

First we provide a detailed derivation of $\frac{\partial{\bar{s}[n]}}{\partial \bar{\tau}}$:
\begin{align} \label{g0} 
  \frac{\partial{\bar{s}[n]}}{\partial \bar{\tau}} =& \beta \frac{\partial \bar{s}(n T_s - \bar{\tau} T_s)}{\partial \bar{\tau}} e^{\textsf{j} 2\pi \bar{\nu} n/N_c}  \nonumber \\
  =& \frac{\beta e^{\textsf{j} 2\pi \bar{\nu} n/N_c}}{\sqrt{N_{c}}} \sum_{m=0}^{N_{c}-1} \bigg(\textsf{j} 2\pi \frac{\partial g_m(n T_s - \bar{\tau}T_s)}{\partial \bar{\tau}}\bigg) x[m] \nonumber \\
  & \times e^{\textsf{j} 2 \pi\left( g_m(n T_s- \bar{\tau}T_s) + c_{2} m^{2}\right)} . 
\end{align}
We note
\begin{align}\label{g1} 
  \frac{\partial g_m(nT_s\! - \!\bar{\tau}T_s)}{\partial \bar{\tau}} = & - T_s \frac{\partial g_m(t)}{\partial t}|_{t = nT_s\!- \!\bar{\tau}T_s} ,
\end{align}
with
\begin{align}\label{g2} 
  \frac{\partial g_m(t)}{\partial t} =& \frac{2 c_{1} t}{T_{s}^{2}}\! +\! \frac{m}{N_{c}T_{s}}\! -\! q_m\Big(\frac{t}{T_s}\Big)\frac{1}{T_s}\! -\! \frac{\partial q_m\big(\frac{t}{T_s}\big)}{\partial t}\frac{t}{T_s}.
\end{align}
Since $q_m\left(\frac{t}{T_s}\right)\! =\! \big\lfloor 2c_1\frac{t}{T_s} + \frac{m}{N_c} \big\rfloor$, $\frac{\partial q_m\big(\frac{t}{T_s}\big)}{\partial t}$ takes the value of $0$ in most cases, and it can be dropped from (\ref{g2}) for approximation. Therefore, we have
\begin{align}\label{g3} 
  \frac{\partial g_m(nT_s\! - \!\bar{\tau}T_s)}{\partial \bar{\tau}} \approx& -\! 2 c_{1} (n \! - \!\bar{\tau})\! -\! \frac{m}{N_{c}}\! +\! \Big\lfloor 2 c_{1} (n \! - \!\bar{\tau})\! +\! \frac{m}{N_c} \Big\rfloor \nonumber \\
  \triangleq& - \mathcal{F}_m(n - \bar{\tau}).
\end{align}
Substituting (\ref{g3}) into (\ref{g0}) yields (\ref{e23}).

Based on the above calculation results, we can compute $\mathbf{F}_{\mathbf{I}}$. Take $F_{I}[1,1]$ as an example, which can be derived as
\begin{align}\label{i11} 
  F_I[1,1] =& \frac{2}{\sigma_{\rm{s}}^2} \Re \left\{ \mathbb{E} \bigg\{ \sum_{n=0}^{N_c-1}\Big|s(n T_s - \bar{\tau}T_s) e^{\textsf{j}2\pi \bar{\nu}n/N_c}\Big|^2 \bigg\} \right\} \nonumber \\
  =& \frac{2}{\sigma_{\rm{s}}^2N_c^2} \Re \Bigg\{ \sum_{m_1=0}^{N_c-1}\!\sum_{m_2=0}^{N_c-1} \mathbb{E} \Big\{x[m_1]x^*[m_2] \Big\} \nonumber \\
  & \hspace*{5mm}\times \bigg(\sum_{n=-\bar{\tau}}^{N_c-1-\bar{\tau}} e^{\textsf{j}2\pi(\eta_{m_1}[n]-\eta_{m_2}[n])}\bigg) \Bigg\}.
\end{align}
Since $\sum_{n=-\bar{\tau}}^{N_c-1-\bar{\tau}} e^{\textsf{j}2\pi(\eta_{m_1}[n]-\eta_{m_2}[n])} =0 $ for $m_1\neq m_2$, (\ref{i11}) can be further simplified as
\begin{align}\label{i12} 
  F_I[1,1] =& \frac{2}{\sigma_{\rm{s}}^2N_c^2} \mathbb{R}\Bigg\{ N_c\!\sum_{m_1=0}^{N_c-1} \mathbb{E}\big\{|x[m_1]|^2\big\} \Bigg\} = \frac{2P_t}{\sigma_{\rm{s}}^2N_c} ,
\end{align}
where $P_m\! =\! \mathbb{E}\{|x[m]|^2\}$ and $P_t\! =\! \sum_{m=0}^{N_c-1}P_m$ .
Other elements of $\mathbf{F}_{\mathbf{I}}$ can be calculated using similar approaches, which are omitted for brevity. By combining all the elements, we obtain the FIM (\ref{fim}).

\subsection{Proof of Theorem~\ref{T1}}\label{app1}

\begin{proof}
By substituting $x[m]\! =\! x_{\rm{p}}[m]+x_{\rm{d}}[m]$ into (\ref{single_a}), we have
\begin{align}\label{e53} 
  & \chi(\tau,\nu) = \nonumber \\
  & C_2 \bigg(\! \sum_{m_1=0}^{N_c-1}\!\sum_{m_2=0}^{N_c-1}\! x_{\rm p}^{*}[m_1] x_{\rm p}[m_2] e^{-\frac{\textsf{j}2\pi m_2\tau}{N_c}} S^{(\tau,\nu)}(m_1,m_2) \nonumber \\
  & \hspace*{1mm} + \sum_{m_1=0}^{N_c-1}\! \sum_{m_2=0}^{N_c-1}\! x_{\rm d}^{*}[m_1] x_{\rm d}[m_2] e^{-\frac{\textsf{j}2\pi m_2\tau}{N_c}} S^{(\tau,\nu)}(m_1,m_2) \nonumber \\
  & \hspace*{1mm} + \sum_{m_1=0}^{N_c-1}\! \sum_{m_2=0}^{N_c-1}\! x_{\rm d}^{*}[m_1] x_{\rm p}[m_2] e^{-\frac{\textsf{j}2\pi m_2\tau}{N_c}} S^{(\tau,\nu)}(m_1,m_2) \nonumber \\
  & \hspace*{1mm} + \sum_{m_1=0}^{N_c-1}\! \sum_{m_2=0}^{N_c-1}\! x_{\rm p}^{*}[m_1] x_{\rm d}[m_2] e^{-\frac{\textsf{j}2\pi m_2\tau}{N_c}} S^{(\tau,\nu)}(m_1,m_2) \!\bigg).
\end{align}
According to the definition of ambiguity function, the first and second terms of (\ref{e53}) are the ambiguity functions of the pilot ($\chi_{\rm{p}}(\tau,\nu)$) and data signal ($\chi_{\rm{d}}(\tau,\nu)$), respectively. This completes the proof. 
\end{proof}

\subsection{Derivation of $\sigma^2_{\chi(\tau,\nu)}$}\label{ApC}

The variance of ${\chi(\tau,\nu)}$ is given by
\begin{align}\label{e57} 
  \sigma^2_{\chi(\tau,\nu)} =& \mathbb{E}\{|\chi(\tau,\nu)|^2\} - |\mathbb{E}\{\chi(\tau,\nu)\}|^2 .
\end{align}
Given the assumptions $\mathbb{E}\{x_{\rm{d}}[m]\}=0$ and $\mathbb{E}\{x^2_{\rm{d}}[m_1]\} =0$, any term of $\mathbb{E}\{|\chi(\tau,\nu)|^2\}$ containing $\mathbb{E}\{x_{\rm{d}}[m]\}$ and $\mathbb{E}\{x^2_{\rm{d}}[m_1]\}$ is zero. Therefore,
\begin{align}\label{e58} 
  \mathbb{E}\{|\chi(\tau,\nu)|^2\} =& |\chi_{\rm{p}}(\tau,\nu)|^2 + \mathbb{E}\{|\chi_{\rm{d}}(\tau,\nu)|^2\} \nonumber \\
	& + \mathbb{E}\{|\chi_{\rm{dp}}(\tau,\nu)|^2\} + \mathbb{E}\{|\chi_{\rm{pd}}(\tau,\nu)|^2\} \nonumber \\
	& + \mathbb{E}\big\{ 2\Re\{\chi_{\rm{p}}(\tau,\nu)\chi^*_{\rm{d}}(\tau,\nu)\}\big\}.
\end{align}
According to (\ref{expectation}), 
\begin{align}\label{e59} 
  |\mathbb{E}\{\chi(\tau,\nu)\}|^2 =& |\chi_{\rm{p}}(\tau,\nu)|^2 + |\mathbb{E}\{\chi_{\rm{d}}(\tau,\nu)\}|^2 \nonumber \\
  & + \mathbb{E}\big\{ 2\Re\{\chi_{\rm{p}}(\tau,\nu)\chi^*_{\rm{d}}(\tau,\nu)\}\big\}.
\end{align}
By substituting (\ref{e58}) and (\ref{e59}) into (\ref{e57}), we obtain
\begin{align}\label{e60} 
  \sigma^2_{\chi(\tau,\nu)}\! =\! \sigma^2_{\chi_{\rm{d}}(\tau,\nu)}\! +\! \mathbb{E}\{|\chi_{\rm{dp}}(\tau,\nu)|^2\}\! +\! \mathbb{E}\{|\chi_{\rm{pd}}(\tau,\nu)|^2\} .
\end{align}
The first term in (\ref{e60}) can be derived as 
\begin{align}\label{eq-Term1} 
  \sigma^2_{\chi_{\rm{d}}(\tau,\nu)} = & \frac{1}{N_c^2} \bigg( \sum_{m_1=0}^{N_c-1}\sum_{m_2 \neq m
 _1}^{N_c-1} \sigma_d^4 |S^{(\tau,\nu)}(m_1,m_2)|^2 \nonumber \\
  & \hspace*{-5mm}+ \sum_{m=0}^{N_c-1}\big(\mathbb{E}\{|x[m]|^4\}-\sigma_d^4\big) |S^{(\tau,\nu)}(m,m)|^2 \bigg)\! .\!
\end{align}
The second term in (\ref{e60}) is equal to the third term, which can be expressed as
\begin{align}\label{eq-Term2} 
  \mathbb{E}\{|\chi_{\rm{dp}}(\tau,\nu)|^2\}\! = & \frac{1}{N_c^2}\!\! \sum_{m_1=0}^{N_c-1} \sum_{m_2=0}^{N_c-1}\!\! \sigma_d^2 |x_{\rm{p}}[m_2]|^2 |S^{(\tau,\nu)}(m_1,m_2)|^2 \! .
\end{align}
According to (\ref{s}), $|S^{(\tau,\nu)}(m_1,m_2)|^2$ can be calculated as
\begin{equation}\label{s1} 
  |S^{(\tau,\nu)}(m_1,m_2)|^2\! =\!
   \begin{cases}
     \! N_c^2, \, \text{if }  
     \langle m_2\!-\!m_1\rangle_{N_c}\!\! =\! \text{loc}^{(\tau,\nu)}\! ,\! \\
     \! 0, ~\text{otherwise}.
   \end{cases}
\end{equation}
Therefore, when $(\tau,\nu)=(0,0)$, we obtain
\begin{align}\label{e65} 
  \sigma^2_{\chi(\tau,\nu)} =& N_c\big(\mathbb{E}\{|x[m]|^4\}-\sigma_d^4\big) + 2 \sigma_d^2 \sum_{m_1=0}^{N_c-1} |x_{\rm{p}}[m_1]|^2 \nonumber \\
  =& N_c\big(\mathbb{E}\{|x[0]|^4\}-\sigma_d^4\big) + 2 \sigma_d^2 \sigma_{\rm{p}}^2.
\end{align}
When $(\tau,\nu)\neq (0,0)$, we have
\begin{align}\label{e66}  
  \sigma^2_{\chi(\tau,\nu)} =& \sum_{m_1=0}^{N_c-1} \sigma_d^4 + 2 \sigma_d^2 \sum_{m_1=0}^{N_c-1} \big|x_{\rm{p}}[\langle m_1 + \text{loc}^{(\tau,\nu)}\rangle_{N_c}]\big|^2 \nonumber \\
  =& N_c \sigma_d^4 + 2 \sigma_d^2 \sigma_{\rm{p}}^2.
\end{align}

\subsection{Proof of Theorem~\ref{T2}}\label{ApD}

\begin{proof}
As seen in (\ref{sigma}), when $(\tau,\nu)\! \neq\! (0,0)$, the variance is irrelevant to specific modulation method. On the other hand, when $(\tau,\nu)\! =\! (0,0)$, the variance can be reduced by minimizing the fourth moment of the constellation's amplitude, i.e., $\mathbb{E}\{|x[0]|^4\}$. 
According to the Cauchy-Schwarz inequality, the fourth moment is no less than the square of the second moment, i.e., $\mathbb{E}\{|x[0]|^4\}\! \geq\! \sigma^4_{\rm{d}}$.
Furthermore, $\mathbb{E}\{|x[0]|^4\}\! =\! \sigma^4_{\rm{d}}$ if and only if the constellation points have constant modulus.
Therefore, by employing the constant-modulus modulation, the ambiguity function variance is minimized for $(\tau,\nu)\! =\! (0,0)$. This completes the proof.
\end{proof}
 
\subsection{Proof of Theorem~\ref{T3}}\label{ApE}

\begin{proof}
By choosing $\sigma^2_{\rm{d}}=P_{\rm{d}}/N_c$ and $\mathbb{E}\{|x[0]|^4\} = \sigma_d^4$ in (\ref{sigma}), the variance can be written as
\begin{align}\label{sigma1} 
  \sigma^2_{\chi(\tau,\nu)} =& \begin{cases}
    2 \frac{P_{\rm{d}} \sigma_p^2}{N_c}, ~ \text{if}(\tau,\nu) = (0,0), \\
    2 \frac{P_{\rm{d}} \sigma_p^2}{N_c} + \frac{P^2_{\rm{d}}}{N_c}, ~ \text{otherwise}.
  \end{cases}
\end{align}
For $(\tau,\nu)=(0,0)$,
\begin{align}\label{eq-Lim1} 
  \lim_{N_c\to \infty} \sigma^2_{\chi(\tau,\nu)} = \lim_{N_c\to \infty} 2 \frac{P_{\rm{d}} \sigma_p^2}{N_c} = 0.
\end{align}
For $(\tau,\nu)\neq(0,0)$, 
\begin{align}\label{eq-Lim2} 
  \lim_{N_c\to \infty} \sigma^2_{\chi(\tau,\nu)} = \lim_{N_c\to \infty} 2\frac{P_{\rm{d}} \sigma_p^2}{N_c} +  \frac{P^2_{\rm{d}}}{N_c} = 0.
\end{align}
This completes the proof.
\end{proof}

\subsection{Proof of Theorem~\ref{th4}}\label{ApF}

\begin{proof}
In \cite{super_pilot}, it has been proved that the minimal $\mathbb{E}\{\|\hat{\bm{\alpha}} - \bm{\alpha}\|^2\}$ can be achieved when the column vectors of the matrix $\mathbf{\Psi}_{\rm{p}} $ in (\ref{r2}) are orthogonal with each other.
We now prove that a pilot sequence $\mathbf{x}_{\rm{p}}$ with an ideal ambiguity function always ensures that any two column vectors of $\mathbf{\Psi}_{\rm{p}} $ are mutually orthogonal.
The inner product between the $i$-th and $j$-th columns of $\mathbf{\Psi}_{\rm{p}}$ for $i\neq j$ can be expressed as
\begin{align}\label{eq-Th4} 
  & \mathbf{x}_{\rm{p}}^{\rm H} \mathbf{\Phi}_i^{\rm H} \mathbf{\Phi}_j \mathbf{x}_{\rm{p}} = \mathbf{x}_{\rm{p}} \mathbf{A} \left(\mathbf{\Gamma}_{\text{CPP}_i} \mathbf{\Pi}^{\tau_i} \mathbf{\Delta}_{\nu_i}\right)^{\rm H} \mathbf{\Gamma}_{\text{CPP}_j} \mathbf{\Pi}^{\tau_j} \mathbf{\Delta}_{\nu_j} \mathbf{A}^{\rm H} \mathbf{x}_{\rm{p}} \nonumber \\
  & \hspace{5mm}=  \mathbf{x}_{\rm{p}} \mathbf{A} \mathbf{\Delta}_{-\nu_i} \mathbf{\Pi}^{N_c-\tau_i} \mathbf{\Gamma}_{\text{CPP}_i}^{\rm H} \mathbf{\Gamma}_{\text{CPP}_j} \mathbf{\Pi}^{\tau_j} \mathbf{\Delta}_{\nu_j} \mathbf{A}^{\rm H} \mathbf{x}_{\rm{p}}. 
\end{align}
When $2c_1N_c$ is an integer, $\mathbf{\Gamma}_{{\text{CPP}}_i}\! =\! \mathbf{I}_{N_c}$. According to the definitions of $\mathbf{\Pi}^{\tau_i}$ and $\mathbf{\Delta}_{\nu_i}$, $\mathbf{\Pi}^{\tau_i} \mathbf{\Delta}_{\nu_i}\! =\! e^{-\textsf{j}2\pi \nu_i \tau_i/N} \mathbf{\Delta}_{\nu_i} \mathbf{\Pi}^{\tau_i}$. Based on these properties, $\mathbf{x}_{\rm{p}}^{\rm H} \mathbf{\Phi}_i^{\rm H} \mathbf{\Phi}_j \mathbf{x}_{\rm{p}}$ can be simplified as 
\begin{align}\label{inner1} 
  \mathbf{x}_{\rm{p}}^{\rm H} \mathbf{\Phi}_i^{\rm H} \mathbf{\Phi}_j \mathbf{x}_{\rm{p}} &= \mathbf{x}_{\rm{p}} \mathbf{A} \mathbf{\Delta}_{-\nu_i} \mathbf{\Pi}^{\tau_j-\tau_i} \mathbf{\Delta}_{\nu_j} \mathbf{A}^{\rm H} \mathbf{x}_{\rm{p}} \nonumber \\
  & \hspace*{-10mm}= e^{-\textsf{j}2\pi \nu_i(\tau_j-\tau_i)/N} \mathbf{x}_{\rm{p}} \mathbf{A} \mathbf{\Delta}_{\nu_j-\nu_i}  \mathbf{\Pi}^{\tau_j-\tau_i} \mathbf{A}^{\rm H} \mathbf{x}_{\rm{p}} .
\end{align}
(\ref{inner1}) can be interpreted as the cross-correlation between the pilot sequence with its corresponding echo obtained with a time delay of $\hat{\tau} = \tau_j\!-\!\tau_i$ and Doppler shift of $\hat{\nu} = \tau_j\!-\!\tau_i$, which is aligned with the ambiguity function definition. Therefore, (\ref{inner1}) can be equivalently rewritten as 
\begin{align}\label{inner2} 
  \mathbf{x}_{\rm{p}}^{\rm H} \mathbf{\Phi}_i^{\rm H} \mathbf{\Phi}_j \mathbf{x}_{\rm{p}} =& e^{-\textsf{j}2\pi \nu_i\hat{\tau}/N} \chi_{\rm{p}}(\hat{\tau},\hat{\nu}).
\end{align}
Since $(\tau_i,\nu_i),(\tau_j,\nu_j)\! \in\! \Omega_{\rm{R}}$, we have $(\hat{\tau},\hat{\nu})\! \in\! \Omega_{\rm{A}}$. Furthermore, $(\hat{\tau},\hat{\nu})\! \neq\! (0,0)$ for $i\! \neq\! j$. According to the definition of ideal ambiguity function, $\mathbf{x}_{\rm{p}}^{\rm H} \mathbf{\Phi}_i^{\rm H} \mathbf{\Phi}_j \mathbf{x}_{\rm{p}}\! = \! 0$.
 \end{proof}

\subsection{Proof of Theorem~\ref{T5}}\label{ApG}

\begin{proof}
The ambiguity function of the proposed pilot sequence can be expressed as
\begin{align}\label{e44} 
	\chi_{\rm{p}}(\tau,\nu) =& C_2 \sum_{k_1=0}^{N_{\rm{p}}} \sum_{k_2=0}^{N_{\rm{p}}} x_p^{*}[k_1Q] x_p[k_2Q] e^{-\textsf{j}2\pi\frac{k_2 Q\tau}{N_c}} \nonumber \\ 
  &\times S(k_1Q,k_2Q,\tau,\nu) .
\end{align}
According to (\ref{s}), $S(k_1Q,k_2Q,\tau,\nu)$ is non-zero if and only if $\langle k_2Q-k_1Q\rangle_{N_c}\! =\! \text{loc}^{(\tau,\nu)}$. Based on this property, we discuss the value of $\chi_{\rm{p}}(\tau,\nu)$ in the following two cases. 

\emph{\textbf{Case 1}}:
When $Q$ cannot divide $\text{loc}^{({\tau},{\nu})}$, there do not exist values of $k_1$ and $k_2$ that satisfy equation $\langle k_2Q-k_1Q\rangle_{N_c}=\text{loc}^{( {\tau}, {\nu})}$, since $\langle k_2Q-k_1Q\rangle_{N_c}$ is always a multiple of $Q$.
Therefore, $S(k_1Q,k_2Q, {\tau}, {\nu}) $ is always zero and $\chi_{\rm{p}}( {\tau}, {\nu}) = 0$.

\emph{\textbf{Case 2}}:
When $Q$ can divide $\text{loc}^{({\tau},{\nu})}$, there exists an integer $\hat{k}$ satisfying $\text{loc}^{( {\tau}, {\nu})} = \hat{k}Q$. 
By recalling the expressions of $\text{loc}^{( {\tau}, {\nu})}$ and $Q$, we obtain
\begin{align}\label{e45} 
  2 c_1 N_c 2^r \hat{k} = 2 c_1 \tau N_c - \nu.
\end{align}
Based on the multipath separation ability of AFDM in the DAFT domain, i.e., if $\tau_1\! \neq\! \tau_2$ or $\nu_1\! \neq\! \nu_2$, $\text{loc}^{(\tau_1,\nu_1)}\! \neq\! \text{loc}^{(\tau_2,\nu_2)}$,
the only solution of (\ref{e45}) is $( {\tau}, {\nu})\! =\! ({2^r}\hat{k},0)$.
By substituting this solution into the (\ref{e44}), the expression is simplified as
\begin{align}\label{e46} 
	\chi_{\rm{p}}({\tau}, {\nu}) =& \chi_{\rm{p}}({2^r}\hat{k},0) =  C_3 \sum_{k_1=0}^{{N_{\rm{p}}}}  x_p^{*}[k_1Q] x_p[\langle k_1Q+\hat{k}Q\rangle_{N_c}]  \nonumber \\
  &  \times  e^{-\textsf{j}2\pi\frac{(k_1Q+\hat{k}Q) {2^r}\hat{k}}{N_c}} e^{\textsf{j}2\pi c_2Q^2 (2k_1\hat{k})\!}  ,
\end{align}
where $ C_3\! =\! C_2N_c e^{\textsf{j}2\pi c_2Q^2\hat{k}^2}$. Using $x_{\rm{p}}[m]$ of (\ref{pilot2}) in (\ref{e46}) leads to
\begin{align}\label{e47} 
	\chi_{\rm{p}}({2^r}\hat{k},0) = C_4 \sum_{k_1=0}^{N_{\rm{p}}}  z^{*}[k_1] z[\langle k_1 +\hat{k}\rangle_{N_{\rm{p}}}] ,   
\end{align}
where $C_4\! \! =C_2N_c\frac{\sigma^2_{\rm{p}}}{N_{\rm{p}}}e^{-\textsf{j}2\pi \frac{\hat{k}^2 2^r Q}{2N_c}}$. Since $z[k]$ is a CAZAC sequence, we have $\chi_{\rm{p}}({2^r}\hat{k},0)\! =\! 0$ when $\hat{k}\! \neq\! 0$, which means $\chi_{\rm{p}}( {\tau}, {\nu})\! =\! 0$ when $({\tau}, {\nu})\! \neq\! (0,0)$. By combining the conclusions of Case~1 and Case~2, the value of $\chi_{\rm{p}}({\tau},{\nu})$ is non-zero only at $(\tau,\nu)\! =\! (0,0)$. Therefore, the proposed pilot design, which satisfies (\ref{pilot2}), can ensure an ideal ambiguity function.
\end{proof}

\bibliographystyle{IEEEtran}

\end{document}